\numberwithin{equation}{section}
\newtheorem{theorem}{Theorem}[section]
\newtheorem{lemma}[theorem]{Lemma}
\theoremstyle{remark}
\newtheorem{remark}[theorem]{Remark}
\newtheorem{coro}[theorem]{Corollary}
\newtheorem{proposition}[theorem]{Proposition}
\begin{document}

\title[Density moments for the $d$-dimensional Fermi gas]{Moments of the ground state density for the $d$-dimensional Fermi gas in an harmonic trap} 

\author{Peter J. Forrester}
\address{School of Mathematical and Statistics, ARC Centre of Excellence for Mathematical and Statistical Frontiers, The University of Melbourne, Victoria 3010, Australia}
\email{pjforr@unimelb.edu.au}

\subjclass[2010]{15B52,81V70}
\date{}

\dedicatory{}

\keywords{}

\begin{abstract}
We consider properties of the ground state density for the
$d$-dimensional Fermi gas in an harmonic trap.
Previous work has shown that the $d$-dimensional Fourier transform has a very simple functional form. It is shown that this fact can be used to deduce that the density itself satisfies a third order linear differential equation, previously known in the literature but from other considerations. It is shown too how this implies a closed form expression for the $2k$-th non-negative integer moments of the density, and a second order recurrence. Both can be extended to
general Re$\, k > -d/2$. The moments, and the smoothed density, permit expansions in $1/\tilde{M}^2$, where $\tilde{M} = M + (d+1)/2$,
with $M$ denoting the shell label. The moment expansion substituted in the second order recurrence gives a generalisation of the Harer--Zagier recurrence, satisfied by the coefficients of the $1/N^2$ expansion of the moments of the spectral density for the Gaussian unitary
ensemble in random matrix theory.
\end{abstract}

\maketitle
\section{Introduction}
Consider a statistical system of $N_0$ particles confined to a 
$d$-dimensional region $V \subseteq \mathbb R^d$. Let $\Omega  
\subseteq V$ be a subdomain of $V$ with a non-zero volume in
$\mathbb R^d$ and let $N(\Omega)$ denote the expected number of 
particles in $\Omega$. The one-particle density $\rho^{N_0}(\mathbf r)$ is defined so that
$N(\Omega) = \int_{\Omega} \rho^{N_0,d}(\mathbf r) \, {\rm d}^d \mathbf r$. Our interest is in properties of the density for the
particular statistical system specified by
the ground state of $N_0$ spinless
free fermions in $\mathbb R^d$, confined (to leading order, and
upon a rescaling) to a ball about the origin by each being
subject to an isotropic harmonic potential.

Being a free system, the $N_0$-body Hamiltonian $\mathcal H_{N_0}$ for this Fermi gas is a sum
of independent one-body Hamiltonians
$$
\mathcal H_{N_0} = \sum_{j=1}^{N_0} \mathcal H^{(d)}(\mathbf x_j), \qquad
\mathcal H^{(d)}(\mathbf x) = - {1 \over 2} \Big ( \nabla^2 - || \mathbf x ||^2 \Big ),
$$
where dimensionless units are assumed. 
Writing $\mathbf x_j = (x_j^{(1)},\dots, x_j^{(d)})$,
we see that $ \mathcal H^{(d)}$ itself separates into $d$ one-dimensional operators
$$
 \mathcal H^{(d)}(\mathbf x) = \sum_{k=1}^d \mathcal H(x^{(k)}), \qquad
 \mathcal H(x) = - {1 \over 2} \Big ( {\partial^2 \over \partial x^2} - x^2 \Big ).
$$
Here $\mathcal H(x)$ is the one-dimensional harmonic
oscillator Hamiltonian, and so has normalised eigenfunctions
\begin{equation}\label{2.0}
\psi_l(x) = {1 \over \sqrt{2^l l! \pi^{1/2}}} H_l(x) e^{- x^2/2}, \qquad (l=0,1,2,\dots),
\end{equation}
where $H_l(x)$ denotes the Hermite polynomial of degree $l$,
each with corresponding eigenvalue $\varepsilon_l = l + 1/2$. 
Hence $\mathcal H^{(d)}(\mathbf x)$ 
has eigenfunctions of the explicit factorised form
\begin{equation}\label{2.1}
\psi_{\mathbf l}(\mathbf x) = \prod_{k=1}^d \psi_{l_k}(x^{(k)}), \quad
\mathbf l = (l_1,\dots, l_d),
\end{equation}
and corresponding eigenvalue $\varepsilon_{\mathbf l} = \sum_{k=1}^d
\varepsilon_{l_k}$. A many body state --- that is an
eigenfunction of $\mathcal H_{N_0}$ --- is formed as the product
of $N_0$ eigenfunctions of the form (\ref{2.1}), the $j$-th such member
depending on the co-ordinate of the $j$-th particle $\mathbf x_j$, which
must further be anti-symmetrised to give a Fermi state. 

To classify the ground state the shell label $M=0,1,2,\dots$ is introduced by the requirement
that 
\begin{equation}\label{3.0}
\sum_{j=1}^d l_j \le M .
\end{equation}
It is a straightforward exercise to show that the number of non-negative integer
arrays $\mathbf l$ satisfying this constraint is
\begin{equation}\label{3.1}
\binom{M+d}{d}.
\end{equation}
For given $M$, choosing the number of particles $N_0$ to equal (\ref{3.1}), and denoting the
arrays satisfying (\ref{3.0}) by $\mathbf l_1, \mathbf l_2,\dots, \mathbf l_{N_0}$,
one sees that the anti-symmetrisation of $\prod_{s=1}^{N_0} \psi_{\mathbf l_s}(\mathbf x_s)$
--- upon normalisation by the factor $1/\sqrt{N!}$ ---
gives the ground state eigenfunction (i.e.~eigenfunction
corresponding to the smallest eigenvalue, also referred to as
the ground state wave function), $\psi^{(0),d}(\mathbf x_1,\dots, \mathbf x_{N_0})$ say.

The one-particle density $\rho^{N_0,d}(\mathbf r)$ is computed from the ground state wave function according to
\begin{equation}\label{3.1a}
\rho^{N_0,d}(\mathbf r) = \int_{\mathbf R^d} d \mathbf x_1 \cdots  \int_{\mathbf R^d} d \mathbf x_{N_0} \,
\Big ( \sum_{l=1}^{N_0} \delta(\mathbf r - \mathbf x_l) \Big ) \Big (
\psi^{(0),d}(\mathbf x_1,\dots, \mathbf x_{N_0}) \Big )^2.
\end{equation}
The construction of $\psi^{(0),d}$ in terms of the orthonormal eigenfunctions (\ref{2.0}), together with
the  shell constraint (\ref{3.0}), shows that with
$\mathbf r = (r_1,\dots, r_d)$,
\begin{equation}\label{3.2}
\rho^{N_0,d}(\mathbf r) = \sum_{l=0}^{M} \sum_{\mathbf l : \sum_{j=1}^d l_j = l} 
\prod_{k=1}^d \Big ( \psi_{l_k}(r_k) \Big )^2.
\end{equation}
As noted in \cite{Mu04},
an early reference for this formalism in the case $d=1$ is Husimi \cite{Hu40};
our presentation has followed \cite{LG12}.
Although not immediately obvious from the form (\ref{3.2}), in keeping with the physical setting
$\rho^{N_0,d}(\mathbf r)$ is spherically symmetric and is thus a function of $|| \mathbf r || =: r$,
so we may write $\rho^{N_0,d}(\mathbf r) = \rho^{N_0,d}(r)$.

Earlier literature has uncovered a number of remarkable features of this density
\cite{LM79,BM03,ZBSB03,Mu04,SZ07,BN07,vZ12,DDMS15}. Those that impact on the present study, for which
the main theme is to investigate moments of $\rho^{N_0,d}(r)$, will be given self contained
derivations in Section \ref{S2}.
A focus on moments is suggested by the explicit form of the squared ground state eigenfunction for
$d= 1$. The above construction tells that then
$$
\psi^{(0), d=1}(x_1,\dots,x_{N_0}) = {1 \over \sqrt{N!}} {\mathcal A}{\rm symm} \, \prod_{l=1}^{N_0} \psi_{l-1}(x_l) =
{1 \over \sqrt{N!}} \det \Big [ \psi_{l-1}(x_j) \Big ]_{l,j=1}^{N_0}.
$$
Recalling (\ref{2.0}), and using the fact that for any monic polynomials $\{ p_{j-1}(x) \}_{j=1}^{N_0}$ with
$p_{j-1}$ of degree $j-1$, 
$$
\det [ p_{j-1}(x_k) ]_{j,k=1}^{N_0} = 
\det [ x_{k}^{j-1}]_{j,k=1}^{N_0} =
\prod_{1 \le j < k \le N_0} (x_k - x_j),
$$
this being the Vandermonde determinant formula,
it follows that for $d=1$ the ground state wave function squared is proportional to 
\begin{equation}\label{3.2a}
\prod_{l=1}^{N_0} e^{- x_l^2} \prod_{1 \le j < k \le N_0}(x_k - x_j)^2.
\end{equation}

The functional form (\ref{3.2a}) is  perhaps best known 
from the theory of random matrices. It occurs there
(see e.g.~\cite[Prop.~1.3.4]{Fo10}) as the eigenvalue probability
density function for $N_0 \times N_0$
Hermitian matrices $Y = {1 \over 2} (X + X^\dagger) $, where the entries of $X$ are
independent standard complex Gaussians. Equivalently, $Y$ is sampled from an ensemble of complex Hermitian matrices
with weight proportional to $e^{- {\rm Tr} \, Y^2}$, known as the Gaussian unitary ensemble (GUE).
In this setting the averages $\langle {\rm Tr} \, Y^{2k} \rangle$ correspond to the $2k$-th moment of the density.
Since the work of Br\'ezin et al.~\cite{BIPZ78} (for a textbook introduction, see e.g.~\cite[\S 1.6]{Fo10}) it has been known that 
such GUE averages have topological interpretations when expanded for large $N_0$. In fact the average
divided by $N_0$ is a polynomial of degree $k$ in $N_0$ which is even (odd) for $k$ even (odd),
\begin{equation}\label{5.1}
{1 \over N_0} \Big \langle {\rm Tr} \, Y^{2k}
\Big \rangle = \sum_{l=0}^{\lfloor k/2 \rfloor} N_0^{(k-2l)} \mu_{k,l},
\end{equation}
and moreover up to scaling the coefficients $\{ \mu_{k,l}\}$ are positive integers.

Harer and Zagier \cite{HZ86} deduced the recurrence
\begin{equation}\label{6.1}
(k+2) \mu_{k+1,l} = (k+1/2) k (k - 1/2) \mu_{k-1,l-1} + (2k+1) \mu_{k,l},
\end{equation}
subject to the initial condition $\mu_{0,0} = 1$, and boundary conditions $\mu_{k,l} = 0$ for
$k<0$ or $l< 0$ or $l > [k/2]$. For $l=0$ the recurrence simplifies to $(k+2) \mu_{k+1,0} =
(2k+1) \mu_{k,0}$ and so
\begin{equation}\label{6.2}
2^k \mu_{k,0} = {1 \over k + 1} \binom{2k}{k},
\end{equation}
which is the $k$-th Catalan number. This is in keeping with the limiting one-body density --- now
corresponding to the spectral density --- having the functional form of the Wigner semi-circle law
\begin{equation}\label{6.2a}
\lim_{N_0 \to \infty} {1 \over N_0^{1/2}} \rho^{N_0, 1} (N_0^{1/2} x) = \rho^{\rm W}(x),
\quad \rho^{\rm W}(x) := {\sqrt{2} \over \pi} (1 - x^2/2)^{1/2}
\chi_{|x| < \sqrt{2}},
\end{equation}
where $\chi_A = 1$ for $A$ true, $\chi_A = 0$ otherwise.
 In random matrix theory, there are a number of ensembles giving rise to various recursive structures of the moments
 \cite{HT03,GT05,Le04,Le09,MS11,PZ11,FL15,CMOS18,
 	CDO18,RF19,DF19}.
 
We will show in Section \ref{S3} below that the polynomial form (\ref{5.1}), upon appropriate choice of the
expansion parameter, carries over to the moments of $\rho^{N_0,d}(\mathbf r)$ in the general $d$ case. We will
show too that the coefficients satisfy a recurrence of
the same structure as (\ref{6.1}).

\begin{theorem}\label{T1}
	Let $\rho^{N_0,d}( r)$ be specified by (\ref{3.2}). Define the normalised $2k$-th radial moment by
	\begin{equation}\label{6.20}
	m_{2k}^{N_0,d} := {1 \over N_0} \int_{\mathbb R^d} || \mathbf r ||^{2k} \rho^{N_0,d}( r) \, d \mathbf r,
		\end{equation}
		where $k$ is a non-negative integer.
	Further define
	\begin{equation}\label{6.2b}
	\tilde{M} = M + {(d+1) \over 2},
	\end{equation}
	where $M$ is related to $N_0$ by the latter being equal to (\ref{3.1}). We have that in the variable
	$\tilde{M}$ these moments permit an expansion of the form (\ref{5.1}),
		\begin{equation}\label{6.2c}
		m_{2k}^{N_0,d} = \sum_{l=0}^{\lfloor k/2 \rfloor} \tilde{M}^{k-2l} \mu_{k,l}^{(d)},
	\end{equation}
	where the coefficients satisfy the generalisation of the Harer-Zagier recurrence
	\begin{equation}\label{7.1}
	(k+d+1) \mu_{k+1,l}^{(d)} = k(k+d/2)(k+d/2-1) \mu_{k-1,l-1}^{(d)} + (2k+d) \mu_{k,l}^{(d)},
	\end{equation}
	subject to the same initial and boundary conditions.
	\end{theorem}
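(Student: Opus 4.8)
The plan is to route everything through a single three-term recurrence for the moments themselves and then obtain both (\ref{6.2c}) and (\ref{7.1}) from it by one induction. Concretely, I would first establish the ``master'' recurrence
\[
(k+d+1)\, m_{2k+2}^{N_0,d} = (2k+d)\,\tilde{M}\, m_{2k}^{N_0,d} + k\Big(k+\tfrac{d}{2}\Big)\Big(k+\tfrac{d}{2}-1\Big)\, m_{2k-2}^{N_0,d},
\]
valid for all non-negative integers $k$, together with the base datum $m_0^{N_0,d}=1$ (immediate from (\ref{6.20}) and $\int_{\mathbb R^d} \rho^{N_0,d}=N_0$). Note that substituting the ansatz (\ref{6.2c}) into this relation and reading off the coefficient of $\tilde{M}^{\,k+1-2l}$ reproduces (\ref{7.1}) term by term: the left-hand side supplies $\mu_{k+1,l}^{(d)}$, the term $\tilde{M}\, m_{2k}$ supplies $\mu_{k,l}^{(d)}$, and $m_{2k-2}$ (whose power $\tilde{M}^{\,(k-1)-2(l-1)}$ matches) supplies $\mu_{k-1,l-1}^{(d)}$. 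Thus the theorem reduces to (a) proving the master recurrence and (b) establishing the polynomial form (\ref{6.2c}).

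For (a) I would use the third-order linear differential equation satisfied by $\rho^{N_0,d}(r)$, available in Section \ref{S2} from the simple functional form of the $d$-dimensional Fourier transform. Writing the radial moment as $m_{2k}^{N_0,d}\propto \int_0^\infty r^{2k+d-1}\rho^{N_0,d}(r)\,dr$, I would multiply the differential equation by $r^{2k+d-1}$ and integrate over $(0,\infty)$, integrating by parts to move all derivatives off $\rho^{N_0,d}$. The boundary contributions vanish because $\rho^{N_0,d}$ and its derivatives decay like a Gaussian, and the fact that $\rho^{N_0,d}(r)=e^{-r^2}\times(\text{polynomial in }r^2)$ fixes the parity of the $r$-powers multiplying each derivative, so every term collapses to a moment with an integer-shifted index. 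Collecting the results gives a three-term relation among $m_{2k-2},m_{2k},m_{2k+2}$ with polynomial-in-$k$ coefficients; the only $M$-dependent coefficient of the differential equation feeds into the middle term, and the particular shift $\tilde{M}=M+(d+1)/2$ is precisely what packages it as the clean factor $(2k+d)\tilde{M}$ while leaving the outer coefficients free of $M$.

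For (b) I would argue by induction on $k$ using the master recurrence. Assuming $m_{2k}^{N_0,d}$ is a polynomial in $\tilde{M}$ of degree $k$ with the same parity as $k$, the term $\tilde{M}\, m_{2k}$ has degree $k+1$ and parity $k+1$, while $m_{2k-2}$ has degree $k-1$, also of parity $k+1$; hence $m_{2k+2}$ has degree $k+1$ and parity $k+1$, which is exactly the content of (\ref{6.2c}). The base cases $m_0=1$ and $m_2^{N_0,d}=\tfrac{d}{d+1}\tilde{M}$ (the $k=0$ instance of the master recurrence) start the induction and simultaneously fix $\mu_{0,0}^{(d)}=1$ and $\mu_{1,0}^{(d)}=d/(d+1)$. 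The boundary conventions $\mu_{k,l}^{(d)}=0$ for $k<0$, $l<0$, or $l>\lfloor k/2\rfloor$ are inherited automatically from the degree/parity bookkeeping, and specialising $d=1$ returns (\ref{6.1}).

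The main obstacle is step (a): deriving the differential equation in general $d$ from the Fourier transform and, above all, carrying out the integration by parts cleanly enough to pin down the exact coefficients $(k+d+1)$, $(2k+d)$ and $k(k+d/2)(k+d/2-1)$, and to verify that all $M$-dependence collapses into the single linear factor $\tilde{M}$. Confirming that the boundary terms vanish and that the $r$-power parities keep all moment indices integral is the delicate bookkeeping; once the master recurrence is in hand, part (b) and the passage to (\ref{7.1}) are routine. As a consistency check I would verify that the resulting $l=0$ coefficients satisfy $\mu_{k+1,0}^{(d)}=\frac{2k+d}{k+d+1}\mu_{k,0}^{(d)}$, reducing for $d=1$ to the Catalan-number recurrence behind (\ref{6.2}).
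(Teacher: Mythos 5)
Your proposal is correct in substance, and it reaches Theorem \ref{T1} by a route that differs from the paper's own proof in one essential respect. Your ``master'' recurrence is exactly the paper's (\ref{mk1}): multiply (\ref{mk1}) through by $(2k+d)/2$ and use $2M+d+1 = 2\tilde M$, $(2k+d)/2 = k+d/2$. The paper derives (\ref{mk1}) entirely in Fourier space: the moments are identified as Taylor coefficients of $\hat{\rho}^{N_0,d}(k) = e^{-k^2/4}L_M^{(d)}(k^2/2)$ via (\ref{fp0})--(\ref{fp3}), this power series is substituted into the second-order ODE (\ref{11.z}) satisfied by the transform, and coefficients of $(-1)^pk^{2p}/(2p)!$ are equated. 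You instead derive the recurrence in real space from the third-order ODE (\ref{11.d4}), by multiplying by a power of $r$ and integrating by parts; the paper itself confirms in Remark \ref{R3.7} that this alternative works and produces the same recurrence (with the added benefit that it extends at once to all ${\rm Re}\,k > -d/2$, something the paper's main proof only recovers later through the Meixner-polynomial identity). Your remaining steps --- induction on the recurrence to get the degree-$k$, parity-of-$k$ polynomial structure (\ref{6.2c}) in $\tilde M$, then extraction of the coefficient of $\tilde M^{k+1-2l}$ to obtain (\ref{7.1}) --- coincide exactly with the paper's proof of the theorem, so from that point on the two arguments merge.

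One slip in your step (a) needs fixing: the multiplier must be $r^{2k+d}$, not $r^{2k+d-1}$. Up to constants, the terms of (\ref{11.d4}) are $\rho'''$, $r^{-1}\rho''$, $r^{-2}\rho'$, $r^2\rho'$, $\rho'$ and $r\rho$, so a multiplier $r^a$ produces, after integration by parts, integrals $\int_0^\infty r^b \rho\,dr$ with $b \in \{a+1,\,a-1,\,a-3\}$. For these to be the even radial moments, i.e.\ of the form $b = 2j+d-1$, you need $a \equiv d \pmod 2$; with your choice $a = 2k+d-1$ every term lands instead on an odd moment $\int_0^\infty r^{2j+d}\rho\,dr$, so the relation you obtain connects the wrong family (equivalently, it is the recurrence at half-integer shifted index). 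Taking $a = 2k+d$, as in Remark \ref{R3.7}, the computation does close: with $r_{2k} := \int_0^\infty r^{2k+d-1}\rho^{N_0,d}(r)\,dr$, boundary terms vanishing by Gaussian decay at infinity and smoothness and evenness of $\rho^{N_0,d}$ in $r$ at the origin, one finds
\[
(k+d+1)\, r_{2k+2} = (2k+d)\,\tilde M\, r_{2k} + k\Big(k+\frac{d}{2}\Big)\Big(k+\frac{d}{2}-1\Big)\, r_{2k-2},
\]
which is equivalent to your master recurrence since $m_{2k}^{N_0,d} = \Omega_d\, r_{2k}/N_0$ and the relation is linear and homogeneous. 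With this correction your argument is complete.
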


The recurrence (\ref{7.1}) in the case $l=0$ shows that for general $d$ 
 \begin{equation}\label{6.2e}
 \mu_{k,0}^{(d)}  = 2^k {\Gamma(d/2 + k) \Gamma(d+1) \over \Gamma(d/2) \Gamma(k+d+1)} 
 \end{equation}
 (cf.~(\ref{6.2})). Making the ansatz $\mu_{k,l}^{(d)} =
 \mu_{k,0}^{(d)} p_{k,l}^{(d)}$ we see that (\ref{7.1})
 reduces to
 \begin{equation}\label{6.2x}
  p_{k+1,l}^{(d)} = {1 \over 4} k (k + d)  p_{k-1,l-1}^{(d)}
  + p_{k,l}^{(d)}
 \end{equation}
 and thus
 \begin{equation}\label{6.2y}
  p_{k,l}^{(d)} = {1 \over 4}
  \sum_{s=0}^{k-2} (s + 1) (s + 1 + d)   p_{s,l-1}^{(d)}, \qquad
  p_{k,0}^{(d)} := 1.
 \end{equation}
 In particular, we see from (\ref{6.2y}) that
 $p_{k,l}^{(d)}$ is a polynomial of degree $3l$ in $k$.
 In the case $d=1$ these polynomials
 are given explicity for $l=1,2,3$ in \cite[Theorem 7]{WF14}. For general $d$, and
 with $l=1$ we deduce from the above that 
\begin{eqnarray}\label{1.19}
\mu_{k,1}^{(d)}  = \mu_{k,0}^{(d)} {k (k-1) (2k+3d-1) \over 24}.
\end{eqnarray}

The moment formula (\ref{6.2e}) is consistent with the spectral density having as its scaled
limit the Thomas-Fermi density (see e.g.~\cite{Ca06})
\begin{equation}\label{8.d}
\lim_{N_0 \to \infty} {\tilde{M}^{1/2} \over N_0} \rho^{N_0,d}(\tilde{M}^{1/2} r) = 
\rho^{\rm TF}(r), \qquad  \rho^{\rm TF}(r) :=
  {1 \over (2 \pi)^{d/2}} {\Gamma(d+1) \over \Gamma(d/2+1)} 
  \Big ( 1 - {r^2 \over 2} \Big )^{d/2} \chi_{0 \le r < \sqrt{2}}.
\end{equation}
 Thus \cite{BM03}
\begin{equation}\label{8.e}
\int_{\mathbb R^d} || \mathbf r ||^{2k} \rho^{\rm TF}(r) \,
{\rm d}^d \mathbf r = \mu_{k,0}^{(d)}.
\end{equation}

In keeping with (\ref{6.20}) and (\ref{6.2c}) there is a generalisation of (\ref{8.e}) involving a terminating expansion in
$1/\tilde{M}^2$.

\begin{proposition}\label{P2}
	Let $\Omega_d = 2 \pi^{d/2}/\Gamma(d/2)$ denote the surface area of a $d$-dimensional
	ball of unit radius.
	The radial moments permit the $1/\tilde{M}^2$ expansion
\begin{align}\label{8.f}	
 \int_{\mathbb R^d} || \mathbf r ||^{2k}
{\tilde{M}^{2k+d/2} \over N_0} \rho^{N_0,d}(\tilde{M}^{1/2} r)
\,  {\rm d}^d \mathbf r & = \sum_{l=0}^{\lfloor k/2 \rfloor} {1 \over \tilde{M}^{2l}}  \mu_{k,l}^{(d)} \nonumber \\
& = \sum_{l=0}^{ \lfloor  k/2  \rfloor } {\Omega_d \over \tilde{M}^{2l}}
\int_{0}^{\sqrt{2}} r^{2k+d-1}
\rho^{\infty,d}_{(l)}(r) \, dr
\end{align}
(for $l \ge 1$ the measure $\rho^{\infty,d}_{(l)}(r) \, dr$ will typically contain an atom at the
upper terminal $r = \sqrt{2}$),
where $\rho^{\infty,d}_{(0)}(r) = \rho^{\rm TF}(r)$ as given in
(\ref{8.d}), and where $\{ \rho^{\infty, d}_{(l)}(r) \}$
satisfies the coupled differential equations
\begin{equation}\label{1.21}
B \rho^{\infty,d}_{(l)}(r) = A \rho^{\infty,d}_{(l-1)}(r)
\end{equation}
with 
\begin{align}\label{1.22}
A  = {1 \over 8} \Big ( {d^3 \over d r^3} +
{d - 1 \over r} {d^2 \over d r^2}- {d -1 \over r^2} {d \over dr}
\Big ), \qquad
B  = \bigg ( \Big ( {r^2 \over 2} - 1 \Big ) {d \over d r} -
{r d \over 2} \bigg )
\end{align}
and $\rho^{\infty,d}_{(-1)}(r) := 0$.
\end{proposition}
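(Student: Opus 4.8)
The plan is to establish the two displayed equalities in (\ref{8.f}) separately, and then to derive the differential recurrence (\ref{1.21}). The first equality is essentially a change of variables combined with the moment expansion (\ref{6.2c}) already proved in Theorem \ref{T1}. Substituting $\mathbf r \mapsto \tilde M^{1/2}\mathbf r$ in the integral on the left, the scaling factor $\tilde M^{2k+d/2}/N_0$ is chosen precisely so that the left-hand side collapses to $\tilde M^{2k} m_{2k}^{N_0,d}/\tilde M^{2k}$ up to the normalisation, and invoking (\ref{6.2c}) turns this into $\sum_{l} \tilde M^{-2l}\mu^{(d)}_{k,l}$. So the first equality requires no genuinely new input beyond bookkeeping of the powers of $\tilde M$.

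For the second equality I would \emph{define} the functions $\rho^{\infty,d}_{(l)}(r)$ through the large-$\tilde M$ asymptotic expansion of the smoothed density itself, i.e.\ write
\begin{equation}
{\tilde M^{d/2}\over N_0}\,\rho^{N_0,d}(\tilde M^{1/2} r) \sim \sum_{l\ge 0} {1\over \tilde M^{2l}}\,\rho^{\infty,d}_{(l)}(r),
\end{equation}
with $\rho^{\infty,d}_{(0)}=\rho^{\rm TF}$ being the leading Thomas--Fermi term from (\ref{8.d}). Passing this expansion through the radial integral $\int_{\mathbb R^d}\|\mathbf r\|^{2k}(\cdots)\,{\rm d}^d\mathbf r = \Omega_d\int_0^\infty r^{2k+d-1}(\cdots)\,dr$, and matching the resulting coefficient of $\tilde M^{-2l}$ against the $\mu^{(d)}_{k,l}$ produced by the first equality, yields exactly the claimed integral representation. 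The fact that the radial support is cut at $r=\sqrt 2$ (and the warning about an atom at the endpoint) comes from the compact support of $\rho^{\rm TF}$ together with the boundary behaviour of the higher-order terms, which I expect to produce delta-type contributions there.

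The substantive step is (\ref{1.21}). The natural strategy is to start from the third order linear differential equation satisfied by $\rho^{N_0,d}(r)$ asserted in the introduction (to be derived self-containedly in Section \ref{S2}), rescale $r\mapsto \tilde M^{1/2} r$, and expand the coefficients of that ODE in powers of $1/\tilde M^2$. The operators $A$ and $B$ in (\ref{1.22}) should emerge as, respectively, the highest-derivative part (the third order piece, carrying the explicit $1/8$ and the $d$-dependent lower-order terms coming from the radial Laplacian in $d$ dimensions) and the first order transport-type operator built from the turning-point factor $(r^2/2-1)$. Collecting terms at each order $\tilde M^{-2l}$ in the rescaled ODE and equating them then forces the inhomogeneous recurrence $B\rho^{\infty,d}_{(l)} = A\rho^{\infty,d}_{(l-1)}$, with $\rho^{\infty,d}_{(-1)}=0$ giving the homogeneous equation $B\rho^{\infty,d}_{(0)}=0$ at leading order, which one checks is indeed solved by $\rho^{\rm TF}$.

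The main obstacle will be the careful order-by-order expansion of the exact third order ODE after the $\tilde M^{1/2}$ rescaling: one must track how the shell-dependent coefficients of that equation decompose into even powers of $1/\tilde M$, verify that only even powers survive (so that the expansion genuinely proceeds in $1/\tilde M^2$), and confirm that the leading and subleading operators assemble precisely into the $A$ and $B$ of (\ref{1.22}) rather than merely something of the same shape. A secondary subtlety is the treatment of the boundary terms at $r=\sqrt 2$: since $B$ is only first order, solving $B\rho^{\infty,d}_{(l)}=A\rho^{\infty,d}_{(l-1)}$ determines $\rho^{\infty,d}_{(l)}$ only up to behaviour at the turning point, and I expect the endpoint atoms to be fixed by requiring consistency with the moment values $\mu^{(d)}_{k,l}$ computed from Theorem \ref{T1}, which is what ties the differential recurrence back to the algebraic one and closes the argument.
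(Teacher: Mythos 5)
Your proposal follows essentially the same route as the paper: the first equality is bookkeeping with Theorem \ref{T1}, the second is the definition of $\rho^{\infty,d}_{(l)}$ via the smoothed $1/\tilde{M}^2$ expansion of the density matched against the moments (the paper's (\ref{fp10}) and (\ref{3.16a})), and the recurrence comes from rescaling $r \mapsto \tilde{M}^{1/2} r$ in the third order equation (\ref{11.d4}) and substituting the expansion, with $B\rho^{\infty,d}_{(0)}=0$ then reclaiming $\rho^{\rm TF}$. The one obstacle you flag---verifying that only even powers of $1/\tilde{M}$ survive---evaporates immediately: with $\tilde{M} = M+(d+1)/2$ the rescaled equation is \emph{exactly} $B \rho^{N_0,d}(\tilde{M}^{1/2} r) = \tilde{M}^{-2} A \rho^{N_0,d}(\tilde{M}^{1/2} r)$ with no further corrections (this is the paper's (\ref{3.17a})), so no order-by-order expansion of the coefficients is needed.
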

In subsection \ref{S3.4} the explicit form of $\rho^{\infty,d}_{(l)}$ for $l=1,d=1$ (an already known result)
and $l=1, d=2$ will be presented.

The result of Proposition \ref{P2} is a corollary of $\rho^{N_0,d}(r)$ itself satisfying a 3rd order differential equation.
For $d=1$ this fact was first deduced by Lawes and March \cite{LM79}. For $d=2$ the differential equation
was deduced in a paper by Minguzzi et al.~\cite{MMT01}, while for general $d$ it is due to Brack and
Murthy \cite{BM03}. In subsection \ref{s2.2} we will give a derivation of this differential equation as a
corollary of the explicit form for the Fourier
transform \cite{BN07,vZ12}
\begin{equation}\label{9.1}
\hat{\rho}^{N_0,d}(k) : = \int_{\mathbb R^d} \rho^{N_0,d}(r) e^{i \mathbf k \cdot \mathbf r} \, {\rm d}^d \mathbf r =
 e^{- k^2/4} L_M^{(d)}(k^2/2),
\end{equation}
where $k= | \mathbf k|$ and $L_n^{(\alpha)}(x)$ denotes the Laguerre polynomial. Before doing so,
in subsection \ref{s2.1},
a self contained derivation of (\ref{9.1}) will be given. It is this latter result which underpins
Theorem \ref{T1}.

\section{Characterisations of $\rho^{N_0,d}(r)$}\label{S2}
\subsection{Fourier transform: derivation of (\ref{9.1})}\label{s2.1}
From the text about (\ref{3.1}) we have that the number particles $N_0$ is related to the shell
label $M$ by 
\begin{equation}\label{NM}
N_0 = \binom{M+d}{d}.
\end{equation}
Following at first \cite{Mu04}, introduce the generating function for the densities $\rho^{N_0,d}(r)$, 
$M=0,1,\dots$ by
\begin{equation}\label{11.e}
G(r;t) := \sum_{M=0}^\infty \rho^{N_0,d}(r) t^{M}.
\end{equation}
According to (\ref{3.2}) we have
$$
G(r;t) = \sum_{M=0}^\infty \Big ( \sum_{m=0}^M \sum_{l_1 + \cdots +  l_d = m} 
| \psi_{l_1}(x_1) \cdots \psi_{l_d}(x_d) |^2 \Big ) t^M.
$$

Introduce the notation $\sigma^{(d)}_m(\mathbf r) =  \sum_{l_1 + \cdots +  l_d = m} 
| \psi_{l_1}(x_1) \cdots \psi_{l_d}(x_d) |^2$. Then we observe
\begin{align*}
(1 - t) G(r;t) & = \sum_{M=0}^\infty \Big (\sigma^{(d)}(\mathbf r) ) (t^M - t^{M+1}) \Big ) \\
& =  \sum_{M=0}^\infty \bigg ( \sum_{m=0}^M  \sigma^{(d)}_m(\mathbf r)  - 
 \sum_{m=0}^{M-1}  \sigma^{(d)}_m(\mathbf r) \bigg ) t^M \\
 & = \sum_{M=0}^\infty   \sigma^{(d)}_M(\mathbf r) t^M \\
 & = \sum_{M=0}^\infty \sum_{l_1 + \cdots +  l_d = M} 
| \psi_{l_1}(x_1) \cdots \psi_{l_d}(x_d) |^2 t^M \\
& = \prod_{j=1}^d \bigg ( \sum_{l=0}^\infty | \psi_l (x_j) |^2 t^l \bigg );
\end{align*}
this is \cite[line 1 of eq.~(7)]{Mu04}.
Making use now of the Mehler formula 
$$
e^{-(x^2 + y^2)/2} 
\sum_{n=0}^\infty {(\rho/2)^n \over n!} H_n(x) H_n(y) =
{1 \over \sqrt{1 - \rho^2}} \exp {4 xy \rho - (1 + \rho^2)(x^2 + y^2) \over 2 (1 - \rho^2)},
$$
it follows that
\begin{equation}\label{11.a}
(1 - t) G(r;t) = {1 \over \pi^{d/2}} {1 \over (1 - t^2)^{d/2}} \exp \Big ( - r^2 {1 - t \over 1 + t} \Big ),
\end{equation}
which is \cite[line 2 of eq.~(7)]{Mu04}.

Consider now the $d$-dimensional Fourier transform of $G(r;t)$,
\begin{equation}\label{12.4}
\hat{G}(k;t) := \int_{\mathbb R^d} G(r;t) e^{i \mathbf k \cdot \mathbf r} \, {\rm d}^d \mathbf r.
\end{equation}
Using the one-dimensional Fourier transform
$$
\int_{-\infty}^\infty e^{- \gamma p^2} e^{i k p} \, dp = {\pi^{1/2} \over \gamma^{1/2}} e^{- k^2/ 4 \gamma},
$$
we see from (\ref{11.a}) that
\begin{equation}\label{12.4a}
\hat{G}(k;t) = {1 \over (1 - t)^{d+1}} \exp \Big ( - {k^2 \over 4} \Big )
 \exp \Big ( - {t k^2 \over 2 (1 - t)} \Big ).
\end{equation}

Recalling now the generating function for the Laguerre polynomials 
\begin{equation}\label{12.4b}
\sum_{n=0}^\infty t^n L_n^{(\alpha)}(x) = {1 \over (1 - t)^{\alpha + 1}}\exp \Big ( - {tx \over 1 - t} \Big ),
\end{equation}
we thus have
$$
[t^{M}] \hat{G}(k;t) = \exp  \Big (- {k^2 \over 4} \Big )  L_{M}^{(d)} \Big ( {k^2 \over 2} \Big ),
$$
where in general the notation $[t^p] f(t)$ refers to the coefficient of $t^p$ in the power series expansion of $f(t)$.
According to the definition of $\hat{G}(k;t)$, with the definition of $G(r;t)$ substituted from
(\ref{11.e}), this is equivalent to the sought result (\ref{9.1}), first derived in
\cite{BN07,vZ12} using different reasoning.
\subsection{Third order linear differential equation satisfied by  $\rho^{N_0,d}(r)$}\label{s2.2}
We can use (\ref{9.1}) to first deduce that $\hat{\rho}^{N_0,d}(k)$ satisfies a particular
second order linear differential equation.

\begin{coro}\label{C1}
We have that $\hat{\rho}^{N_0,d}(k)$ satisfies
\begin{equation}\label{11.z}
{d^2 \over d k^2} f + {2 d + 1 \over k} {d \over d k} f + \bigg ( - \Big ( {k \over 2} \Big )^2 + d + 1 + 2M \bigg ) f = 0.
\end{equation}	
\end{coro}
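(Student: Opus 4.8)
The plan is to show that the explicit form \eqref{9.1} for $\hat{\rho}^{N_0,d}(k)$ satisfies \eqref{11.z} by using the fact that the Laguerre polynomial $L_M^{(d)}$ obeys a known second order ODE, and then translating that ODE through the change of variable $x = k^2/2$ and the Gaussian prefactor $e^{-k^2/4}$. Writing $f(k) = e^{-k^2/4} L_M^{(d)}(k^2/2)$, I would first recall the Laguerre differential equation
\begin{equation*}
x y'' + (\alpha + 1 - x) y' + M y = 0, \qquad y = L_M^{(\alpha)}(x),
\end{equation*}
specialised to $\alpha = d$. The task is then purely computational: express $\tfrac{d}{dk}$ and $\tfrac{d^2}{dk^2}$ acting on $f$ in terms of derivatives of $L_M^{(d)}$ with respect to $x = k^2/2$, and verify that the resulting identity collapses to \eqref{11.z} upon invoking the Laguerre equation.

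Concretely, I would set $y(x) = L_M^{(d)}(x)$ with $x = k^2/2$, so that $dx/dk = k$. Then $f = e^{-k^2/4} y$, and differentiating gives
\begin{equation*}
f' = e^{-k^2/4}\Big( -\tfrac{k}{2} y + k\, y' \Big),
\end{equation*}
where the prime on $y$ denotes $d/dx$. Differentiating once more produces a combination of $y$, $y'$, and $y''$ with coefficients that are polynomials in $k$ times the common factor $e^{-k^2/4}$. Substituting these expressions into the left-hand side of \eqref{11.z} and factoring out $e^{-k^2/4}$ leaves a linear combination of $y$, $y'$ and $y''$. The coefficient of $y''$ should reduce to a multiple of $x = k^2/2$, allowing me to eliminate $y''$ via the Laguerre equation $x y'' = -(\,d+1-x\,) y' - M y$; after this substitution I expect all remaining terms to cancel identically.

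The only genuine obstacle is bookkeeping: the factor $\tfrac{2d+1}{k} f'$ introduces a term with an explicit $1/k$, and one must check that the pieces arising from $f''$ that are singular at $k=0$ exactly match it so that the full expression is a polynomial in $k$ (with no residual $1/k$ or $1/k^2$ singularities) before the Laguerre equation is applied. Tracking these lowest-order-in-$k$ contributions carefully is the crux; once they are seen to combine correctly, the verification that the polynomial part vanishes is routine. I would therefore organise the computation by grouping terms according to their power of $k$ and their dependence on $y, y', y''$, confirm the cancellation of the singular terms first, and then apply the Laguerre equation to dispatch the regular part.
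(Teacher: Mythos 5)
Your proposal is correct and follows essentially the same route as the paper: both start from the explicit form \eqref{9.1}, invoke the Laguerre differential equation $x y'' + (\alpha+1-x)y' + n y = 0$ with $\alpha = d$, $n = M$, and carry out the change of variable $x = k^2/2$ together with the Gaussian prefactor $e^{-k^2/4}$ (the paper transforms the ODE in two steps, you substitute and verify, but the computation is the same). Note also that the $1/k$ bookkeeping you flag as the crux is actually immediate, since $f'$ carries an overall factor of $k$, so the term $\frac{2d+1}{k}f'$ is manifestly regular.
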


\begin{proof}
	We know that $y = L_n^{(\alpha)}(x)$ satisfies the second
	order linear differential equation
	$$
	x y'' + (\alpha + 1 - x) y' + n y = 0.
	$$
	A simple change of variables shows that $\tilde{y} =
	 L_n^{(\alpha)}(k^2/2)$ satisfies
	 $$
	 {1 \over 2} {d^2 \over d k^2} \tilde{y} +
	 \Big ( {\alpha + 1/2 \over k} - {k \over 2} \Big )
	 {d \over d k} \tilde{y} + n \tilde{y} = 0.
	 $$
	 Introducing now $f = e^{- k^2/4} \tilde{y}$,
	 setting $n = M$, $\alpha = d$, and recalling 
	(\ref{9.1}) gives (\ref{11.z}).
	\end{proof}

When acting on a function of $k := || \mathbf k ||$, we know from the form of the Laplacian in $d$-dimensions that
\begin{equation}\label{Nk}
\nabla_{\mathbf k}^2 = {d^2 \over d k^2} + {(d-1) \over k} {d \over dk}.
\end{equation}
Hence
\begin{equation}\label{11.v1}
\bigg ({d^2 \over d k^2} + {(d-1) \over k} {d \over dk} \bigg )\hat{\rho}^{N_0,d}(k) = 
- \int_{\mathbb R^d} {\rho}^{N_0,d}(	r) r^2 e^{i \mathbf r \cdot \mathbf k} \, {\rm d}^d \mathbf r.
\end{equation}
Similarly
\begin{align}\label{11.v2}
- k^2  \int_{\mathbb R^d} {\rho}^{N_0,d}(	r) e^{i \mathbf r \cdot \mathbf k} \, {\rm d}^d \mathbf r
& = \int_{\mathbb R^d} {\rho}^{N_0,d}(	r) \nabla_{\mathbf r}^2 e^{i \mathbf r \cdot \mathbf k} \, {\rm d}^d \mathbf r \\
& = \int_{\mathbb R^d}   \Big ( \nabla_{\mathbf r}^2{\rho}^{N_0,d}(	r) \Big ) e^{i \mathbf r \cdot \mathbf k} \, {\rm d}^d \mathbf r,
\end{align}
where the second equality follows by integration by parts.
As a consequence of Corollary \ref{C1} it therefore follows that
\begin{equation}\label{11.v3}
 \int_{\mathbb R^d} \bigg (\Big ( {1 \over 4} \nabla_{\mathbf r}^2  - r^2  + d + 1 + 2M \Big ) {\rho}^{N_0,d}(	r) \bigg ) e^{i \mathbf r \cdot \mathbf k} \, {\rm d}^d \mathbf r 
 + { d + 2 \over k} {d \over d k}  \hat{\rho}^{N_0,d}(k)
 = 0.
 \end{equation}
 
 In deducing (\ref{11.v3}) from knowledge that $\hat{\rho}^{N_0,d}(k)$ satisfies (\ref{11.z}), only its definition as
 a $d$-dimensional Fourier transform has been used. 
 If, in addition,
 use is made of special properties of $\hat{\rho}^{N_0,d}(k)$,
 it is possible to also write the final term in (\ref{11.v3}) as the $d$-dimensional Fourier transform of a particular
 function of $r$.
 
 \begin{lemma}
 We have
 \begin{equation}\label{11.v4}
 - {1 \over k} {d \over d k}  \hat{\rho}^{N_0,d}(k) =  \int_{\mathbb R^d} \bigg ( \int_r^\infty s \rho^{N_0,d}(s) \, ds \bigg )
 e^{i \mathbf r \cdot \mathbf k} \, {\rm d}^d \mathbf r .
 \end{equation}
 \end{lemma}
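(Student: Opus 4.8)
The plan is to turn both sides of (\ref{11.v4}) into one-dimensional radial (Hankel) integrals and to identify them using two complementary Bessel recurrences together with a single integration by parts. Since $\rho^{N_0,d}$ is radial, I would begin by recording its $d$-dimensional Fourier transform in Hankel form,
$$\hat\rho^{N_0,d}(k) = (2\pi)^{d/2}\, k^{1-d/2}\int_0^\infty \rho^{N_0,d}(r)\, r^{d/2}\, J_{d/2-1}(kr)\,dr,$$
and I would note that $g(r) := \int_r^\infty s\, \rho^{N_0,d}(s)\, ds$ satisfies $g'(r) = - r\, \rho^{N_0,d}(r)$ and $g(r) \to 0$ as $r \to \infty$.

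For the left-hand side I would apply $-\tfrac{1}{k}\tfrac{d}{dk}$ under the integral sign. Setting $\nu = d/2 - 1$, the $k$-dependent factor is $k^{-\nu} J_\nu(kr)$, and the standard identity $\tfrac{d}{dz}\big(z^{-\nu}J_\nu(z)\big) = - z^{-\nu}J_{\nu+1}(z)$ (with $z = kr$) gives $\tfrac{d}{dk}\big(k^{-\nu}J_\nu(kr)\big) = - r\, k^{-\nu} J_{\nu+1}(kr)$, raising the Bessel index by one. Dividing by $-k$ I expect to obtain
$$-\frac{1}{k}\frac{d}{dk}\hat\rho^{N_0,d}(k) = (2\pi)^{d/2}\, k^{-d/2}\int_0^\infty \rho^{N_0,d}(r)\, r^{d/2+1}\, J_{d/2}(kr)\,dr.$$

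For the right-hand side I would observe that it is the radial Fourier transform $\hat g(k) = \int_{\mathbb R^d} g(r)\, e^{i\mathbf r\cdot\mathbf k}\, {\rm d}^d\mathbf r = (2\pi)^{d/2} k^{1-d/2}\int_0^\infty g(r)\, r^{d/2}\, J_{d/2-1}(kr)\,dr$, and then use the complementary identity $\tfrac{d}{dz}\big(z^{\mu}J_\mu(z)\big) = z^{\mu}J_{\mu-1}(z)$ with $\mu = d/2$ to write $(kr)^{d/2}J_{d/2-1}(kr) = \tfrac{1}{k}\tfrac{d}{dr}\big((kr)^{d/2}J_{d/2}(kr)\big)$, and integrate by parts in $r$. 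Substituting $g'(r) = - r\, \rho^{N_0,d}(r)$ should reproduce exactly the expression found for the left-hand side, which proves (\ref{11.v4}).

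I expect the only genuinely delicate step --- and the point at which the decay properties of $\rho^{N_0,d}$ must be invoked --- to be the vanishing of the boundary terms in the integration by parts: at $r = 0$ the factor $(kr)^{d/2}J_{d/2}(kr) = O(r^{d})$ kills the contribution, while at $r = \infty$ one needs the super-exponential (Gaussian) decay of $\rho^{N_0,d}$, inherited by $g$, to beat the polynomial growth of the Bessel factor; the same decay legitimises differentiating under the integral sign on the left. As a consistency check on the constants I would note that the asserted identity is precisely the dimension-shift relation $-\tfrac{1}{k}\tfrac{d}{dk}\,\mathcal{F}_d[\rho] = \tfrac{1}{2\pi}\,\mathcal{F}_{d+2}[\rho]$ for radial functions, but the direct Hankel computation above has the advantage of being self-contained and of making the function $g$ appear explicitly.
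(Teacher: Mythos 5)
Your proof is correct, but it takes a genuinely different route from the paper's. The paper never touches Bessel functions: it works from the Mehler closed form (\ref{11.a}) of the generating function, noting that $\int_r^\infty s\, G(s;t)\, ds = \tfrac{1}{2}\tfrac{1+t}{1-t}\, G(r;t)$, then Fourier transforms this using (\ref{12.4a}), expands in $t$ via the Laguerre generating function (\ref{12.4b}), and matches the coefficient of $t^M$ (a combination of $e^{-k^2/4}L_M^{(d)}(k^2/2)$ and a Laguerre polynomial of parameter $d+1$) against $-\tfrac{1}{k}\tfrac{d}{dk}\hat\rho^{N_0,d}(k)$ computed from the explicit form (\ref{9.1}); the argument is thus tied to the Laguerre/Mehler structure of this particular density. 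Your Hankel-transform computation --- the recurrences $\tfrac{d}{dz}\big(z^{-\nu}J_\nu(z)\big) = -z^{-\nu}J_{\nu+1}(z)$ and $\tfrac{d}{dz}\big(z^{\mu}J_\mu(z)\big) = z^{\mu}J_{\mu-1}(z)$, plus one integration by parts using $g'(r) = -r\rho^{N_0,d}(r)$ --- uses nothing about $\rho^{N_0,d}$ beyond smoothness and rapid decay, and your treatment of the boundary terms (the $O(r^d)$ vanishing at the origin, Gaussian decay beating the $O(r^{(d-1)/2})$ growth of $r^{d/2}J_{d/2}(kr)$ at infinity) is exactly what is needed; I have checked that your two Hankel expressions agree, including constants. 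What your route buys is generality and a conceptual clarification: the identity is the generic dimension-shift relation $-\tfrac{1}{k}\tfrac{d}{dk}\mathcal{F}_d[\rho] = \tfrac{1}{2\pi}\mathcal{F}_{d+2}[\rho]$ for radial functions, so it in fact requires no ``special properties'' of $\hat\rho^{N_0,d}$ at all --- this slightly sharpens the framing given in the text preceding the lemma. What the paper's route buys is that it stays entirely inside the generating-function machinery already set up in Section \ref{s2.1}, avoids any Bessel asymptotics or convergence discussion, and yields as a byproduct an explicit Laguerre closed form for the Fourier transform of $\int_r^\infty s \rho^{N_0,d}(s)\, ds$, in the same spirit as (\ref{9.1}).
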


 \begin{proof}
 It follows from (\ref{11.a}) that	
 $$
  \int_r^\infty s G(s;t) \, ds =  {1 \over 2}  {1 + t \over 1 - t} \hat{G} (r;t).
  $$
  As a consequence, with $\hat{H}(k;t)$ denoting the $d$-dimensional Fourier transform of $ \int_r^\infty s G(s,t) \, ds$ we have
 \begin{align*}
 \hat{H}(k;t) & =   {1 \over 2}  {1 + t \over 1 - t} \hat{G} (r;t) \\
 & = \Big ( {1 \over 2} + {t \over 1 - t} \Big ) {1 \over (1 - t)^{d+1}} \exp \Big ( - {t k^2 \over 2 (1 - t)} \Big ) \exp \Big ( - {k^2 \over 4}\Big ),
 	\end{align*}
 	where the second line follows from (\ref{12.4a}). Use now of the Laguerre polynomial generating function (\ref{12.4b}) shows
  \begin{align*}
[t^M] \hat{H}(k;t) & = 	{1 \over 2} \exp \Big ( - {k^2 \over 4} \Big ) L_M^{(d)}\Big ( {k^2 \over 2} \Big )+ 	 \exp \Big ( - {k^2 \over 4} \Big )
L_M^{(d+1)}\Big ( {k^2 \over 2} \Big ) \\
& = - {1 \over k} {d \over d k} \bigg ( \exp \Big ( - {k^2 \over 4} \Big ) L_M^{(d)}\Big ( {k^2 \over 2} \Big ) \bigg ) \\
& = - {1 \over k} {d \over d k} \hat{\rho}^{N_0,d}(k),
\end{align*}
where the final equality makes use of (\ref{9.1}). Thus (\ref{11.v4}) is established.
	\end{proof}

Substituting (\ref{11.v4}) in (\ref{11.v3}) gives an integro-differential equation for ${\rho}^{N_0,d}(r)$, which
itself is equivalent to a third order linear equation, first derived for general $d$ in \cite{BM03} using different reasoning.

\begin{proposition}
	We have
	\begin{equation}\label{11.d3}
	- {1 \over 8} \nabla_{\mathbf r}^2 {\rho}^{N_0,d}(r) + {1 \over 2} r^2  {\rho}^{N_0,d}(r) + {d + 2 \over 2}
	\int_r^\infty s  {\rho}^{N_0,d}(s) \, ds = \Big ( M + (d+1/2) \Big ) {\rho}^{N_0,d}(r),
	\end{equation}
	which upon use of (\ref{Nk}) with $\mathbf k$ replaced by $\mathbf r$, and a further differentiation with respect to $r$, implies
		\begin{equation}\label{11.d4}
		\bigg ( - {1 \over 8} {d^3 \over d r^3} - {1 \over 8} {d - 1 \over r} {d^2 \over d r^2} +
		\Big ( {d - 1 \over 8 r^2} + {r^2 \over 2} - (M + (d+1)/2) \Big ) {d \over d r} -
		{ d  \over 2}r \bigg )  {\rho}^{N_0,d}(r) = 0.
		\end{equation}
		\end{proposition}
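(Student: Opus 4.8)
The plan is to derive (\ref{11.d3}) by feeding the Lemma (\ref{11.v4}) into the identity (\ref{11.v3}), and then to produce (\ref{11.d4}) by an elementary reduction. First I would use (\ref{11.v4}) to recognise the stray term $\frac{d+2}{k}\frac{d}{dk}\hat{\rho}^{N_0,d}(k)$ in (\ref{11.v3}) as $-(d+2)$ times the $d$-dimensional Fourier transform of $\int_r^\infty s\,\rho^{N_0,d}(s)\,ds$. Substituting and gathering every term under one Fourier transform gives
\[
\int_{\mathbb R^d}\left[\left(\tfrac14\nabla_{\mathbf r}^2 - r^2 + d+1+2M\right)\rho^{N_0,d}(r) - (d+2)\int_r^\infty s\,\rho^{N_0,d}(s)\,ds\right]e^{i\mathbf r\cdot\mathbf k}\,{\rm d}^d\mathbf r = 0 .
\]
Because the density (\ref{3.2}) is a finite sum of polynomials times Gaussians it lies in Schwartz space, and so does the bracketed function of $r$; hence vanishing of its Fourier transform for all $\mathbf k$ forces the bracket itself to vanish identically. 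Multiplying through by $-\tfrac12$ and transposing the term proportional to $\rho^{N_0,d}$ then yields (\ref{11.d3}), the right-hand coefficient being $\tfrac12(d+1+2M)=M+(d+1)/2=\tilde{M}$.

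To reach (\ref{11.d4}) I would substitute the radial Laplacian (\ref{Nk}) (with $\mathbf k$ replaced by $\mathbf r$) into (\ref{11.d3}), turning $\nabla_{\mathbf r}^2$ into $\frac{d^2}{dr^2}+\frac{d-1}{r}\frac{d}{dr}$, and then differentiate the whole equation once with respect to $r$. The decisive simplification is the fundamental theorem of calculus, $\frac{d}{dr}\int_r^\infty s\,\rho^{N_0,d}(s)\,ds=-r\,\rho^{N_0,d}(r)$, which eliminates the nonlocal integral term and leaves a purely local third-order equation.

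The rest is bookkeeping. Writing $\rho=\rho^{N_0,d}(r)$ for brevity, I would collect the coefficients of $\rho'''$, $\rho''$, $\rho'$ and $\rho$ after differentiating. The one point that needs attention is the cancellation among the undifferentiated contributions: the $r\rho$ produced by $\frac{d}{dr}(\tfrac12 r^2\rho)$ combines with the $-\tfrac{d+2}{2}r\rho$ coming from the integral term to leave the clean coefficient $-\tfrac{d}{2}r$, while the first-derivative terms assemble into $\frac{d-1}{8r^2}+\frac{r^2}{2}-(M+(d+1)/2)$; matching these against (\ref{11.d4}) completes the argument. I expect the only genuine subtlety to be the Fourier-injectivity step used to pass from the transformed identity to the pointwise statement (\ref{11.d3}); once that is granted, the remainder is a single differentiation and routine algebra.
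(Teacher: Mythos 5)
Your proposal is correct and follows essentially the same route as the paper: substitute the Lemma (\ref{11.v4}) into (\ref{11.v3}), invoke injectivity of the Fourier transform to obtain the integro-differential equation (\ref{11.d3}), then use (\ref{Nk}) with $\mathbf k$ replaced by $\mathbf r$ and differentiate once in $r$, the fundamental theorem of calculus removing the nonlocal term. Your identification of the right-hand coefficient as $\tfrac12(d+1+2M)=M+(d+1)/2=\tilde M$ is the correct reading (the typeset form $(d+1/2)$ in (\ref{11.d3}) is a typo, as confirmed by (\ref{11.d4})), and your bookkeeping of the cancellation $\big(1-\tfrac{d+2}{2}\big)r\rho = -\tfrac{d}{2}r\rho$ matches the paper's result exactly.
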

	
	Use will be make of (\ref{11.d4}) to derive the
	recursive differential relations (\ref{1.21}) in
	subsection \ref{S3.3}. In the Appendix we will show
	how a scaling of this equation near the boundary
	of the leading order support (referred to as soft edge scaling) can be used to deduce that the soft edge density
	satisfies a particular third order differential equation.
	
	\section{Properties of the moments}\label{S3}
	\subsection{Closed form and recurrence}
	The normalised $2k$-th radial moment $m_{2k}^{N_0, d}$ is specified by (\ref{6.20}), where it is further required
	that $k$ be a non-negative integer. The significance of this
	latter requirement follows from the fact that, as a 
	consequence of (\ref{9.1}), $m_{2k}^{N_0, d}$ then admits a closed form evaluation.
	
	\begin{proposition}
		In the above setting
		\begin{equation}\label{mk}
		m_{2k}^{N_0, d} = {1 \over N_0}{\Gamma(d/2 + k) \over \Gamma(d/2)}
		\sum_{l=0}^k \binom{k}{k-l} \binom{M+d}{d+l}2^l.
		\end{equation}
		Furthermore, $\{ 	m_{2k}^{N_0, d} \}$ satisfy the
		second order recurrence
			\begin{equation}\label{mk1}
			{2(k+d+1) \over (2k+d)} m_{2k+2}^{N_0, d} =
			(2M + d + 1) m_{2k}^{N_0, d} + k (k + d/2 -1)m_{2k-2}^{N_0, d},
			\end{equation}
			valid for $k=0,1,\dots$ and subject to the initial
			condition $m_0^{N_0, d} = 1$.
		\end{proposition}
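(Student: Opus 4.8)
The plan is to read both assertions off the small-$k$ Taylor expansion of the Fourier transform (\ref{9.1}). Since $\hat{\rho}^{N_0,d}(k) = e^{-k^2/4}L_M^{(d)}(k^2/2)$ is an entire function of $k^2$ and the Gaussian decay of $\rho^{N_0,d}$ guarantees that all the moments are finite, I may write $\hat{\rho}^{N_0,d}(k) = \sum_{j\ge0} c_j k^{2j}$ with $c_j := [k^{2j}]\hat{\rho}^{N_0,d}(k)$, and the first step is to set up a dictionary between the $c_j$ and the radial moments. Expanding $e^{i\mathbf k\cdot\mathbf r}$ in the definition of $\hat{\rho}^{N_0,d}$, noting that only even powers survive because $\rho^{N_0,d}$ is radial, and performing the angular integration via
\[
\frac{1}{\Omega_d}\int_{S^{d-1}}(\hat{\mathbf k}\cdot\hat{\mathbf r})^{2j}\,d\sigma(\hat{\mathbf r}) = \frac{(2j)!}{4^j\,j!\,(d/2)_j},\qquad (d/2)_j := \frac{\Gamma(d/2+j)}{\Gamma(d/2)},
\]
(with $\Omega_d$ the unit-sphere surface area as in Proposition \ref{P2}), together with $\int_{\mathbb R^d}\|\mathbf r\|^{2j}\rho^{N_0,d}\,{\rm d}^d\mathbf r = N_0\, m_{2j}^{N_0,d}$, produces the dictionary
\[
c_j = \frac{(-1)^j}{4^j\,j!\,(d/2)_j}\,N_0\,m_{2j}^{N_0,d}.
\]

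For the closed form (\ref{mk}) I simply extract $c_k$ from the Cauchy product of the exponential series $e^{-k^2/4}=\sum_m(-1)^m k^{2m}/(4^m m!)$ and the explicit Laguerre expansion $L_M^{(d)}(k^2/2)=\sum_n(-1)^n\binom{M+d}{M-n}k^{2n}/(2^n n!)$. This gives a finite sum over $n$; inverting the dictionary, the powers of $2$ and $4$ collapse to $4^n/2^n=2^n$, and after rewriting $\binom{M+d}{M-n}=\binom{M+d}{d+n}$ and $\binom{k}{n}=\binom{k}{k-n}$ one lands exactly on (\ref{mk}), the prefactor $(d/2)_k=\Gamma(d/2+k)/\Gamma(d/2)$ appearing automatically. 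The initial condition $m_0^{N_0,d}=1$ is immediate, either from this formula at $k=0$ or from $\int_{\mathbb R^d}\rho^{N_0,d}=N_0$.

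For the recurrence (\ref{mk1}) I substitute the series $\sum_j c_j k^{2j}$ into the second order equation of Corollary \ref{C1}. Because that equation has coefficients polynomial of low degree in $k^2$, matching the coefficient of $k^{2j}$ yields the three-term relation
\[
4(j+1)(j+1+d)\,c_{j+1} + (2M+d+1)\,c_j - \tfrac14\,c_{j-1} = 0.
\]
Translating back through the dictionary, the ratios $c_{j+1}/c_j$ and $c_{j-1}/c_j$ supply precisely the factors $(j+d+1)/(j+d/2)$ and $j(j+d/2-1)$ respectively, and on clearing signs this is exactly (\ref{mk1}) with $j=k$, since $(k+d+1)/(k+d/2)=2(k+d+1)/(2k+d)$.

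The only genuinely non-routine ingredient is the angular average over $S^{d-1}$ underlying the dictionary; I expect this to be the main obstacle, though it is a standard computation. Once it is in hand, the closed form reduces to bookkeeping with the Laguerre and exponential series, and the recurrence falls straight out of the ODE of Corollary \ref{C1} without any further identity. I would remark that (\ref{mk1}) could alternatively be verified directly from (\ref{mk}) via a three-term binomial identity, but routing through Corollary \ref{C1} bypasses that calculation entirely.
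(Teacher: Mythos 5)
Your proposal is correct and follows essentially the same route as the paper: the same dictionary between the Taylor coefficients of $\hat{\rho}^{N_0,d}(k)$ and the radial moments, the same Cauchy product of the exponential and Laguerre series for the closed form, and the same substitution of the moment series into the ODE of Corollary \ref{C1} for the recurrence. The only cosmetic difference is that you quote the standard spherical average $\frac{1}{\Omega_d}\int_{S^{d-1}}(\hat{\mathbf k}\cdot\hat{\mathbf r})^{2j}\,d\sigma = \frac{(2j)!}{4^j j!\,(d/2)_j}$ directly, whereas the paper derives the equivalent constant $f_{p,d}$ self-containedly by expanding the Fourier transform of a Gaussian.
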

	
	\begin{proof}
		First, to avoid confusion between the use of $k$ in
		(\ref{9.1}), and its use in
		$m_{2k}^{N_0, d}$ as defined by (\ref{6.20}), we will
		replace the latter by $p$ in the subsequent working,
		and so consider 	$m_{2p}^{N_0, d}$ for
		$p \in \mathbb Z_{\ge 0}$. Comparing the definition of
		the latter with
		the definition of $\hat{\rho}^{N_0,d}(k)$ in (\ref{9.1}), we see by setting $\mathbf k = (k,0,\dots,0)$,
		power series expanding the exponential, and changing to polar coordinates, that 
			\begin{equation}\label{fp0}
		m_{2p}^{N_0, d} = {1 \over N_0} f_{p,d} (2p)! (-1)^p [k^{2p}]
		\hat{\rho}^{N_0,d}(k)
		\end{equation}
		where, with $d \Omega_d$ denoting the infinitesimal volume
		element for the angular contribution to the Lebesgue measure on $\mathbb R^d$,
		\begin{equation}\label{fp}
		f_{p,d} = {\int d \Omega_d \over \int \cos^{2p} \theta_1 \,
		d \Omega_d}
	\end{equation}
	
	To evaluate $f_{p,d}$, we first note that
	$$
	e^{- r^2/4} = {\int e^{-k^2} e^{-i \mathbf k \cdot \mathbf r} \,
	d \mathbf k \over \int e^{-k^2} \, d \mathbf k} =
\sum_{p=0}^\infty {(-1)^p r^{2p} \over (2p)!}
{\Gamma (d/2 + p) \over \Gamma(d/2)} {1 \over f_{p,d}},
$$
where the first equality can be verified by evaluating the
integrals, while the second follows by first performing
a power series expansion of the complex exponential, then
evaluating the integral in the numerator and denominator
by changing to polar coordinates.
Now power series expanding the LHS and equating coefficients
of $r^{2p}$ gives
\begin{equation}\label{fp1}
{1 \over f_{p,d}} = {\Gamma(d/2) \over \Gamma(d/2 + p)}
{(2p)! \over 4^p p!}.
\end{equation}

According to the RHS of (\ref{9.1}), upon recalling the series
form of the Laguerre polynomial
$$
L_n^\alpha (x) = \sum_{k=0}^n (-1)^k {1 \over k!}
\binom{n + \alpha}{n-k} x^k,
$$
we have from the formula for the coefficients in the product
of two power series
\begin{equation}\label{fp2}
{1 \over N_0}[k^{2p}]
\hat{\rho}^{N_0,d}(k) = {1 \over N_0} \sum_{l=0}^p (-1)^l {1 \over l!}
\binom{M+d}{M-l} {1 \over 2^l} (-1)^{p-l} {1 \over 4^{p-l}}
{1 \over (p-l)!},
\end{equation}
where use too has been made of (\ref{NM}). Substituting
(\ref{fp1}) and (\ref{fp2}) in (\ref{fp0}) we obtain
(\ref{mk}).

In relation to the recurrence (\ref{mk1}), we begin by noting
from (\ref{fp0}) that
\begin{equation}\label{fp3}
{1 \over N_0}
\hat{\rho}^{N_0,d}(k) = \sum_{p=0}^\infty { (-1)^p m_{2p}^{N_0, d} k^{2p} \over  f_{p,d} (2p)!}.   
\end{equation}
Substituting in the differential equation (\ref{11.z}) and
manipulating the sums gives
\begin{multline*}
- \sum_{p=0}^\infty {(-1)^p k^{2p} m_{2p+2}^{N_0, d} \over (2p)!
f_{p+1,d}} - (2d+1) \sum_{p=0}^\infty {(-1)^p k^{2p} 
m_{2p+2}^{N_0, d} \over (2p+1) (2p)! f_{p+1,d}} \\
+ {1 \over 4}  \sum_{p=0}^\infty {(-1)^p k^{2p} m_{2p-2}^{N_0, d} \over (2p-2)!
	f_{p-1,d}} + (d+1 + 2M) \sum_{p=0}^\infty {(-1)^p k^{2p}
m_{2p}^{N_0, d} \over (2p)! f_{p,d} } = 0.
\end{multline*}
Equating coefficients of $(-1)^p k^{2p}/(2p)!$ then shows
$$
(2p+2d+2) { m_{2p+2}^{N_0, d} \over f_{p+1,d}} =
(2p+1)p(p-1/2)  { m_{2p-2}^{N_0, d} \over f_{p-1,d}} +
(2p+1) (d+1+2M)  { m_{2p}^{N_0, d} \over f_{p,d}}.
$$
Substituting for $\{f_{p,d} \}$ as specified by (\ref{fp1}),
 (\ref{mk1}) follows.
		\end{proof}
	
	Setting $k=1$ and $k=2$ in (\ref{mk}) gives, upon recalling the value of $N_0$ from (\ref{NM}), and simplifying
	\begin{align}
	 m_{2}^{N_0, d} & =  {d(M + (d+1)/2) \over d + 1} \\
	 m_4^{N_0, d} & = {d \over (d+1) } \Big ( (M + (d+1)/2)^2 + (d+1)/4
	 \Big ).
	 \end{align}
	 Note that the dependence on $N_0$ is through the quantity
	 $\tilde{M} := M + (d+1)/2$, as is consistent
	with  (\ref{6.2c}) and (\ref{7.1}) for $k=1$ and 2.
	 In fact both can now be established in the general $k$ case.
	 
	 \medskip
	 \noindent
	 {\it Proof of Theorem \ref{T1}.} \quad That the dependence
	 on $N_0$ in $m_{2k}^{N_0, d}$ is through the quantity
	 $\tilde{M} := M + (d+1)/2$ for general $k$ follows from the recurrence (\ref{mk1}), which moreover implies that $m_{2k}^{N_0, d}$ is even (odd) in $\tilde{M}$ for $k$ even (odd) and so permits the expansion (\ref{6.2c}).
	 Substituting (\ref{6.2c}) in (\ref{mk1}) implies
	  (\ref{7.1}). \hfill $\square$
	  
	  \begin{remark}
	  	The explicit formula (\ref{1.19}) shows that $\mu_{k,1}^{(d)}$ vanishes for $k=0, 1$. More generally
	  	$\mu_{k,l}^{(d)}$ vanishes for $k=0,\dots,l$, as is consistent with 
	  	(\ref{6.2c}).
	  	\end{remark}
	
\subsection{The moments for general $k$}
It is known for $d=1$ that the moments (\ref{mk}) can be written
as a Gauss hypergeometric function \cite{WF14,CMOS18},
\begin{equation}\label{fp4}
2^k m_{2k}^{N_0,d=1} = {(2k)! \over 2^k k!} \,
{}_2 F_1 ( -k, - N_0 + 1; 2; 2).
\end{equation}
As emphasised in \cite{CMOS18} the hypergeometric function is a
polynomial in $k$ of degree $N_0 - 1$, and as such has a unique
analytic continuation from the integers to general 
$k \in \mathbb C$. In particular, (\ref{fp4}) therefore evaluates
the integral (\ref{6.20}) in the case $d=1$ for all $k \in
\mathbf C$, ${\rm Re} \, k > -1/2$, where this latter condition is
required for convergence (note that the factor $(2k)!$ in
(\ref{fp4})  diverges as $k \to  -  1/2$ from above).

The general $d$ case also admits an evaluation well defined in the complex plane.

\begin{proposition}
	We have
	\begin{equation}\label{fp5}
	m_{2k}^{N_0,d} = { \Gamma(d/2 + k) \over  \Gamma(d/2) }\,
	{}_2 F_1 ( -k, - M; d+1; 2),
	\end{equation}
	which is an analytic function of $k$ for ${\rm Re} \, k
	> - {d \over 2}$.
	\end{proposition}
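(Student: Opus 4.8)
The plan is to evaluate the defining integral (\ref{6.20}) directly for general complex $k$, since the derivation of the integer-$k$ formula (\ref{mk}) relied on extracting the Taylor coefficient $[k^{2p}]\hat\rho^{N_0,d}(k)$, a device unavailable once $k$ is no longer a non-negative integer. The natural substitute is the closed-form generating function (\ref{11.a}) for the densities in the shell label $M$. First I would record that the integral (\ref{6.20}) converges precisely when $\mathrm{Re}\,k>-d/2$: after passing to polar coordinates the radial integrand behaves like $r^{2k+d-1}$ near the origin, while the Gaussian decay of $\rho^{N_0,d}(r)$ controls $r\to\infty$. This already identifies the asserted domain of analyticity, and it is no accident that it coincides with the edge of the claimed region.

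Next I would integrate the generating function against the radial measure. Writing $N_0\,m_{2k}^{N_0,d}=\Omega_d\int_0^\infty r^{2k+d-1}\rho^{N_0,d}(r)\,dr$ and using (\ref{11.a}), term-by-term integration (justified for $0<t<1$ by absolute convergence) gives
\[
\sum_{M=0}^\infty N_0\,m_{2k}^{N_0,d}\,t^M=\Omega_d\int_0^\infty r^{2k+d-1}G(r;t)\,dr .
\]
The radial integral is a Gamma integral, $\int_0^\infty r^{2k+d-1}e^{-a r^2}\,dr=\tfrac12 a^{-(k+d/2)}\Gamma(k+d/2)$ with $a=(1-t)/(1+t)$, convergent exactly for $\mathrm{Re}\,k>-d/2$. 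Inserting $\Omega_d=2\pi^{d/2}/\Gamma(d/2)$ and simplifying the powers of $1\pm t$, the right-hand side collapses to
\[
\frac{\Gamma(d/2+k)}{\Gamma(d/2)}\,\frac{(1+t)^k}{(1-t)^{k+d+1}} .
\]

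It then remains to read off the coefficient of $t^M$. Here I would expand $\big((1+t)/(1-t)\big)^k=\sum_i\binom{k}{i}\big(2t/(1-t)\big)^i$ followed by the binomial series for $(1-t)^{-(i+d+1)}$, so that $[t^M]$ of the generating function becomes $\sum_{i=0}^M\binom{k}{i}2^i\binom{M+d}{d+i}$. Converting the two binomial coefficients via $\binom{k}{i}=(-1)^i(-k)_i/i!$ and $\binom{M+d}{d+i}=\binom{M+d}{d}\,(-1)^i(-M)_i/(d+1)_i$ turns this into $\binom{M+d}{d}\,{}_2F_1(-k,-M;d+1;2)$, and recalling $N_0=\binom{M+d}{d}$ from (\ref{NM}) yields (\ref{fp5}). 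For the analyticity claim, since $M$ is a fixed non-negative integer the factor $(-M)_i$ truncates the series, so ${}_2F_1(-k,-M;d+1;2)$ is a polynomial in $k$ (of degree $M$); the only singularities of $\Gamma(d/2+k)/\Gamma(d/2)$ are the poles of $\Gamma(d/2+k)$ at $k=-d/2-n$, the rightmost being $k=-d/2$, so the product is analytic for $\mathrm{Re}\,k>-d/2$, exactly matching the convergence range of the defining integral.

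The main obstacle I anticipate is not any isolated computation but the bookkeeping in the coefficient extraction together with the justification of interchanging the $M$-summation and the $r$-integration. Once the generating function $\tfrac{\Gamma(d/2+k)}{\Gamma(d/2)}(1+t)^k(1-t)^{-(k+d+1)}$ is in hand, the identification with the Gauss hypergeometric function is a routine Pochhammer manipulation; crucially, the derivation never invokes analytic continuation from the integers, so the formula is obtained for all $\mathrm{Re}\,k>-d/2$ simultaneously, with the non-negative integer case reproducing (\ref{mk}) as a built-in consistency check.
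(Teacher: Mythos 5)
Your proposal is correct, but it takes a genuinely different route from the paper. The paper's proof is a two-line manipulation of the already-established integer-$k$ formula (\ref{mk}): rewriting the binomial sum in Pochhammer form identifies it as ${}_2F_1(-k,-M;d+1;2)$, which is a polynomial in $k$ of degree $M$ (since $(-M)_l$ truncates the series), so the domain of analyticity is governed solely by the factor $\Gamma(d/2+k)$; the extension of the \emph{identity} to non-integer $k$ then rests on the unique-analytic-continuation argument inherited from the $d=1$ discussion of \cite{CMOS18}. You instead evaluate the defining integral (\ref{6.20}) directly for all ${\rm Re}\,k>-d/2$: you integrate $r^{2k+d-1}$ against the Mehler-type generating function (\ref{11.a}) (a Gamma integral with $a=(1-t)/(1+t)$, justified by Tonelli since $\rho^{N_0,d}\ge 0$), obtain the moment generating function $\tfrac{\Gamma(d/2+k)}{\Gamma(d/2)}(1+t)^k(1-t)^{-(k+d+1)}$, and read off $[t^M]$ via the binomial series — your coefficient computation and Pochhammer conversions check out, and correctly reproduce $N_0\,{}_2F_1(-k,-M;d+1;2)$. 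What your approach buys is worth noting: strictly speaking, agreement of two functions at the non-negative integers does not by itself identify an analytic function on a half-plane (one needs a growth condition, e.g.\ Carlson's theorem, to rule out terms like $\sin \pi k$), so the paper's continuation step carries an implicit appeal to such an argument; your derivation sidesteps this entirely by proving the identity for every ${\rm Re}\,k>-d/2$ at once, with the integer case (\ref{mk}) recovered as a corollary rather than used as input. What the paper's approach buys is brevity and economy: it recycles (\ref{mk}), whose proof via the Fourier transform (\ref{9.1}) has already been done, whereas your argument redoes the radial integration from scratch (and requires the small additional care you flagged — interchange of the $M$-sum and $r$-integral, and convergence of the binomial expansion of $((1+t)/(1-t))^k$ near $t=0$, both of which are routine).
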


\begin{proof}
	Straightforward manipulation of  (\ref{mk}) gives
	$$
		m_{2k}^{N_0,d} = { \Gamma(d/2 + k) \over  \Gamma(d/2) }\,
	\sum_{l=0}^k {(-k)_l (-M)_l \over l! (d + 1)_l} 2^l,
	$$
	where $(a)_n := a (a+1) \cdots (a+n-1)$ denotes the rising
	factorial Pochhammer symbol. The sum herein is precisely the series form of the Gauss hypergeometric function in (\ref{fp5}). This is a polynomial in $k$ of degree $M$, and
	so is well defined for general complex $k$. The domain of analyticity is thus fully determined by the factor
	$ \Gamma(d/2 + k)$, for which the singularity with largest
	real part occurs at $k=-d/2$.
	\end{proof}

It has been observed in \cite{CMOS18} that the $d=1$ formula
(\ref{fp4}) allows the corresponding general complex moment 
to be expressed in terms of a particular discrete orthogonal polynomial ---
the Meixner polynomial ${\rm M}_n(x;\alpha,c)$ --- in two different
ways. This carries over to the general $d$ case.

\begin{proposition}
	We have
\begin{equation}\label{fp6}
m_{2k}^{N_0,d} =
{ \Gamma(d/2 + k) \over  \Gamma(d/2) }\,
{\rm M}_k(M;d+1,-1) = { \Gamma(d/2 + k) \over  \Gamma(d/2) }\,
{\rm M}_M(k;d+1,-1).
\end{equation}
\end{proposition}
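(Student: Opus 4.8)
The plan is to read off both identities directly from the hypergeometric evaluation \eqref{fp5} of the preceding proposition, using nothing beyond the standard definition of the Meixner polynomial together with the symmetry of the Gauss hypergeometric function in its two numerator parameters. First I would recall the hypergeometric representation of the Meixner polynomials in the normalisation of Koekoek--Lesky--Swarttouw,
\begin{equation*}
{\rm M}_n(x;\beta,c) = {}_2 F_1\!\left( -n, -x; \beta; 1 - \frac{1}{c} \right),
\end{equation*}
and specialise the parameters to $\beta = d+1$ and $c = -1$, so that $1 - 1/c = 2$ and hence
\begin{equation*}
{\rm M}_k(M; d+1, -1) = {}_2 F_1(-k, -M; d+1; 2).
\end{equation*}
Comparing this with \eqref{fp5}, which asserts $m_{2k}^{N_0,d} = (\Gamma(d/2+k)/\Gamma(d/2))\, {}_2 F_1(-k,-M;d+1;2)$, immediately establishes the first equality in \eqref{fp6}.

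For the second equality I would invoke the elementary symmetry ${}_2 F_1(a,b;c;z) = {}_2 F_1(b,a;c;z)$, which is visible at the level of the defining power series since its general term $(a)_l (b)_l (c)_l^{-1} z^l / l!$ is symmetric under $a \leftrightarrow b$. Here both $-k$ and $-M$ are non-positive integers, so each series terminates and the interchange is unproblematic. Swapping the two numerator parameters therefore gives ${}_2 F_1(-k,-M;d+1;2) = {}_2 F_1(-M,-k;d+1;2) = {\rm M}_M(k;d+1,-1)$, which is the claimed second representation.

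The content of the proposition is thus essentially a dictionary entry: the moment is a Meixner polynomial evaluated either in its degree-$k$ form at the argument $M$, or in its degree-$M$ form at the argument $k$, the two being interchangeable because the underlying hypergeometric function is symmetric in $k$ and $M$. The only step requiring genuine care is the matching of conventions, since different references fix the Meixner normalisation and the sign of the parameter $c$ differently; accordingly I would state explicitly which convention is in force and verify that $c = -1$ reproduces the argument $2$ appearing in \eqref{fp5}. There is no analytic obstacle to overcome, because for non-negative integers $k$ and $M$ both sides are manifestly polynomials and the identity holds term by term.
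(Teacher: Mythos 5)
Your proposal is correct and follows exactly the paper's own argument: it deduces both equalities from \eqref{fp5} via the Meixner--hypergeometric relation ${\rm M}_n(x;\beta,c) = {}_2F_1(-n,-x;\beta;1-1/c)$ specialised to $\beta=d+1$, $c=-1$, together with the symmetry of ${}_2F_1$ in its numerator parameters. The only addition beyond the paper's two-line proof is your (sensible) caution about normalisation conventions and the observation that termination of the series makes the interchange trivial.
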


\begin{proof}
	These forms follow from \eqref{fp5}, the relationship between
	the Gauss ${}_2 F_1$ function and the Meixner polynomial
	\begin{equation}\label{fp7}
	{\rm M}_n(x;\alpha,c) = {}_2 F_1 \Big (-n,-x,\alpha; 1 - {1 \over c} \Big ),
	\end{equation}
	and the symmetry ${}_2 F_1(a,b,c;x) = {}_2 F_1(b,a,c;x)$.
	\end{proof}

\begin{coro}
	With $N_0$ determined in terms of $M$ 
	by (\ref{NM}), write $N_0 = N_0(M)$.
	The moments exhibit the reflection formula
		\begin{equation}\label{fp8}
		{1 \over \Gamma (d/2 + k)}
		m_{2k}^{N_0(M),d} = {1 \over \Gamma (d/2 + M)}
		m_{2M}^{N_0(k),d}.
		\end{equation}
		\end{coro}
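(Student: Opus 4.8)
The plan is to read off the reflection symmetry directly from the hypergeometric evaluation \eqref{fp5}, since the substantive work has already been carried out there. Rewriting \eqref{fp5} as
\[
\frac{1}{\Gamma(d/2+k)}\, m_{2k}^{N_0(M),d} = \frac{1}{\Gamma(d/2)}\, {}_2 F_1(-k,-M;d+1;2),
\]
I observe that the right-hand side is manifestly symmetric under the interchange $k \leftrightarrow M$, by the elementary symmetry ${}_2 F_1(a,b,c;x) = {}_2 F_1(b,a,c;x)$ in the upper parameters already invoked in the proof of \eqref{fp6}.

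First I would make explicit that the dependence on $M$ on the left-hand side enters only through the second upper parameter of the ${}_2 F_1$: the prefactor $\Gamma(d/2+k)$ and the normalisation $1/\Gamma(d/2)$ carry no $M$-dependence, so after dividing out $\Gamma(d/2+k)$ the quantity depends on the pair $(k,M)$ solely through $(-k,-M)$ in the hypergeometric function. The one point requiring care is the bookkeeping of the particle number: in $m_{2k}^{N_0(M),d}$ the shell label is $M$, so $N_0 = N_0(M)$ via \eqref{NM}, whereas in $m_{2M}^{N_0(k),d}$ the roles are swapped, with shell label $k$ and $N_0 = N_0(k)$. Applying \eqref{fp5} with $k$ and $M$ interchanged gives
\[
\frac{1}{\Gamma(d/2+M)}\, m_{2M}^{N_0(k),d} = \frac{1}{\Gamma(d/2)}\, {}_2 F_1(-M,-k;d+1;2),
\]
and comparison with the previous display shows the two right-hand sides coincide by the upper-parameter symmetry of ${}_2 F_1$. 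This establishes \eqref{fp8}.

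Equivalently, the argument can be phrased through \eqref{fp6}: the equality of the two forms there is precisely the self-duality ${\rm M}_k(M;d+1,-1) = {\rm M}_M(k;d+1,-1)$ of the Meixner polynomials, and dividing each form by its gamma prefactor yields the symmetric quantity directly. I do not anticipate any genuine obstacle here; once \eqref{fp5} (or \eqref{fp6}) is in hand, the reflection formula is an immediate consequence of a standard symmetry, and the only thing needing attention is the consistent substitution of $N_0(M)$ versus $N_0(k)$ so that the interchange $k \leftrightarrow M$ is applied correctly on both sides.
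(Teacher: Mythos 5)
Your proof is correct and follows essentially the same route as the paper: the paper's proof is simply ``immediate from (\ref{fp6})'', and (\ref{fp6}) itself rests on exactly the upper-parameter symmetry of ${}_2F_1$ (equivalently, the Meixner self-duality) that you invoke. Your unpacking of the $N_0(M)$ versus $N_0(k)$ bookkeeping is the only detail the paper leaves implicit, and you handle it correctly.
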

	
	\begin{proof}
		This is immediate from (\ref{fp6}). 
\end{proof}

\begin{remark} In the case $d=1$, (\ref{fp8}) can
	be found in \cite[Corollary 4.3]{CMOS18}.
	\end{remark}

The Meixner polynomials satisfy the second order difference
equation
\begin{equation}\label{fp9}
c (x + \beta) M_n(x+1;\beta,c) = \Big ( n(c-1) + x + (x + \beta)c \Big ) M_n(x;\beta,c) -
x M_n(x-1;\beta,c).
\end{equation}
Substituting according to the second equality of (\ref{fp6}) we reclaim the moment recurrence
(\ref{mk}). However, in (\ref{fp9}) $k$ is not restricted to be a non-negative integer, so
its validity is now extended to all ${\rm Re} \, k
> - {d \over 2}$.

\begin{remark}\label{R3.7}
	It is also possible to deduce that (\ref{mk}) is valid
	for all ${\rm Re} \, k
	> - {d \over 2}$ by
	using the third order differential equation (\ref{11.d4}).
	Thus we multiply this equation on the left by $r^{d+2k}$, and integrate
	over $r$ from 0 to $\infty$, making use of integration by parts.
	A second order recurrence in $k$ for 
	$$
	r_{2k}^{N_0,d} :=
	\int_0^\infty r^{2k+d-1} \rho^{N_0,d}(r) \, dr
	$$
	results. But the use of polar coordinates gives
	$
	m_{2k}^{N_0,d} = \Omega_d  r_{2k}^{N_0,d} / N_0$, 
	where
	$\Omega_d$ is the surface area of a $d$-dimensional unit ball,
	so this is equivalently a second order recurrence in $k$ for
	$\{m_{2k}^{N_0,d}\}$, which in fact is precisely (\ref{mk}).
	\end{remark}

\subsection{Expansion of the density: proof of 
	Proposition \ref{P2}}\label{S3.3}
As observed in Remark \ref{R3.7} there is an equivalence between the third order differential equation for $\rho^{N_0,d}$ and the second order recurrence for $\{m_{2k}^{N_0,d}\}$. Due to this,
the expansion (\ref{6.2c}) for the moments can equivalenty be studied via an expansion of the density
\begin{equation}\label{fp10}
{\tilde{M}^{d/2} \over N_0} \rho^{N_0,d}(\tilde{M}^{1/2} r) \doteq 
\sum_{l=0}^\infty \tilde{M}^{-2l}\rho_{(l)}^{\infty,d}(r).
\end{equation}
Here the use of $\doteq$ is to indicate that both sides must
be smoothed by integrating over a suitable test function.
In particular,  with $\{ \mu_{k,l}^{(d)} \}$ as specified
by (\ref{6.2c}), we must have
\begin{equation}\label{3.16a}
\Omega_d \int_{0}^{\sqrt{2}} r^{2k+d-1}
 \rho_{(l)}^{\infty,d}(r) \, dr = \mu_{k,l}^{(d)}.
\end{equation}
 This is the second equality in (\ref{8.f}).
 
 In relation to the recurrence (\ref{1.21}), with the notation of
 (\ref{1.22}) we see that upon the change of variables
 $ r \mapsto \tilde{M}^{1/2} r$, the differential equation
 (\ref{11.d4}) can be written
 \begin{equation}\label{3.17a}
 B \rho^{N_0,d} (\tilde{M} r) = {1 \over \tilde{M}^2}
 A \rho^{N_0,d} (\tilde{M} r).
 \end{equation}
 The recurrence  follows from this
 by substituting (\ref{fp10}).
 
 \medskip
 In the case $l=0$ the RHS of (\ref{1.21}) vanishes,
 and after minor manipulation, we obtain the
 differential equation for $\rho^{\infty,d}_{(0)}$,
 $$
 {d \over dr} \log \rho^{\infty,d}_{(0)}(r)
 = \Big ( {r d \over 2} \Big )
 {1 \over r^2/2 - 1}.
 $$
 This equation has the unique non-negative continuous
 solution
 $$
 \rho^{\infty,d}_{(0)}(r) = c_0 (1 - r^2/2)^{d/2}
 \chi_{|r| < \sqrt{2}}.
 $$
 Fixing $c_0 = c_0(d)$ by the requirement that
 $\int_{\mathbb R^d}  \rho^{\infty,d}_{(0)}(r)
 \, d^d \mathbf r = 1$ reclaims
 $\rho^{\rm TF}(r)$ as specified by (\ref{8.e}).
 
 \subsection{Computation of $\rho_{(l)}^{\infty,d}$ for $(l,d) = (1,1)$ and $(1,2)$}\label{S3.4}
 Beyond the case $l=0$, we expect $\rho_{(l)}^{\infty,d}(r)$
 to have singularities at the boundary $r=\sqrt{2}$ of the support, which may involve delta functions.
 
 One way to probe such singularities is to transform
 (\ref{1.21}) from a differential equation for the
 densities to a differential equation for the corresponding
 Stieltjes transform,
 \begin{equation}\label{WS}
 W_{(l)}^{\infty, d}(z) := {\Omega_d \over 2}
 \int_{-\infty}^\infty {
 \rho_{(l)}^{\infty,d}(|r|) \over z - r} |r|^{d-1} \, dr
= {1 \over z} \sum_{k=0}^\infty {\mu_{k,l}^{(d)}
	\over z^{2k}},
\end{equation}
where the final equality follows by an application of the
geometric series formula and use of (\ref{3.16a});
see \cite{GT05}, \cite{HT12} and \cite{WF14} for the case $d=1$.
In \cite{HT12}, the explicit functional form
(albeit with some coefficients specified recursively) of
$\{ W_{(l)}^{\infty,1}(z)\}$ was presented, and we
read off in particular that
\begin{equation}\label{Wz}
 W_{(0)}^{\infty,1}(z) = \Big ( z - \sqrt{z^2 - 2} \Big ),
 \qquad W_{(1)}^{\infty,1}(z) = {1 \over 4}
 (z^2 - 2)^{-5/2}.
 \end{equation}
 The Sokhotski-Plemelj formula can be used to invert (\ref{Wz}), reclaiming $\rho^{\rm TF}(r)|_{d=1}$
 as specified in (\ref{8.d}) for $ \rho_{(0)}^{\infty,1}(r)$,
 and giving
 \begin{equation}
 \rho_{(1)}^{\infty,1}(r) = {1 \over 4 \pi }
 {1 \over (2 - r^2)^{5/2}} \chi_{0 < r < \sqrt{2}}.
 \end{equation}
Although this as a non-integrable singularity as 
$r \to \sqrt{2}^-$, as noted in \cite{WF14} it can be
integrated against power functions using the Euler
beta integral. Doing this,  we see that the LHS of
(\ref{3.16a}) in the case $d=1$, $l=1$
agrees with the RHS as specified by (\ref{6.2e})
and (\ref{1.19}).

In fact the simple structure of (\ref{1.19}) suggests
an alternative approach to the computation of 
$W_{(1)}^{\infty,1}(z)$, which we will carry out in the
case $d=2$. Although this approach is generally applicable,
there are simplifying features for $d=2$ which aid in the calculation. For guidance,
we begin by manipulating (\ref{WS}) so that the integration is over the positive half line, and then substitute
$s=r^2$ to obtain
\begin{equation}\label{W2z}
W_{(l)}^{\infty,2}(z) = z \int_0^\infty {\tilde{\rho}_{(l)}^{\infty,2}(s) \over z^2 - s} \, ds,
\qquad \tilde{\rho}_{(l)}^{\infty,2}(s) :=
{\Omega_2 \over 2}{\rho}_{(l)}^{\infty,2}(r) \Big |_{r^2 = s}.
\end{equation}
In relation to $\tilde{\rho}_{(l)}^{\infty,2}(s)$,
with $l=0$ use of (\ref{8.d}) shows that
\begin{equation}\label{W2c}
\tilde{\rho}_{(0)}^{\infty,2}(s) =
(1 - s/2) \chi_{0 < s < 2}.
\end{equation}
The formula (\ref{W2z}) suggests working with the modified
Stieltjes transform
\begin{equation}\label{GG}
G_{(l)}(x) := {1 \over z} W_{(l)}^{\infty, 2}(z) \Big |_{z^2 = x} = \int_0^\infty {{\rho}_{(l)}^{\infty,2}(s) \over
	x - s} \, ds = {1 \over x} \sum_{k=0}^\infty
{\mu_{k,l}^{(2)} \over x^k}.
\end{equation}
We see from (\ref{W2c}) that for $d=2$ $\tilde{\rho}_{(0)}^{\infty,d}(s)$ has a very simple form.
This, together with the elimination of the
term $|r|^{d-1}$ seen in (\ref{WS}) upon the
change of variable, allows for an efficient computation of
$G_{(1)}(x)$, and consequently of ${\rho}_{(1)}^{\infty,2}(x)$.

\begin{proposition}
	With $G_{(l)}(x)$ specified by (\ref{GG}) we have
	\begin{align}
	G_0(x) & = - \log \Big ( 1 - {2 \over x} \Big )
	+ {x \over 2} \log \Big ( 1 - {2 \over x} \Big ) + 1, 
\label{G1a} \\
	G_1(x) & = - {1 \over 12} \bigg ( 3  \log \Big ( 1 - {2 \over x} \Big )
	- {3 x \over 2} \log \Big ( 1 - {2 \over x} \Big ) 
	+ {3 \over x - 2} - {4 \over (x-2)^2} -3 \bigg ). \label{G1b}
	\end{align}
	Hence
	\begin{equation}\label{G1c}
	\tilde{\rho}_{(1)}^{\infty,2}(x) = {1 \over 4} \Big ( 1 - {x \over 2}
	\Big )  \chi_{0 < x < 2} - {1 \over 4} \delta(x - 2) - {1 \over 3} \delta'(x-2).
	\end{equation}
	\end{proposition}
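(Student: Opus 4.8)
The plan is to obtain $G_0(x)$ and $G_1(x)$ by summing their defining series in \eqref{GG} directly from the explicit moment formulas already in hand, and then to recover $\tilde{\rho}_{(1)}^{\infty,2}$ from $G_1$ by Stieltjes inversion. For $G_0$ I would insert the $d=2$ case of \eqref{6.2e}, namely $\mu_{k,0}^{(2)} = 2^{k+1}/\big((k+1)(k+2)\big)$, into $G_0(x) = x^{-1}\sum_{k\ge 0}\mu_{k,0}^{(2)}x^{-k}$, set $u = 2/x$, and use the partial fraction $1/\big((k+1)(k+2)\big) = 1/(k+1) - 1/(k+2)$ to reduce everything to the elementary series $\sum_{k\ge 0}u^k/(k+1) = -u^{-1}\log(1-u)$ and $\sum_{k\ge 0}u^k/(k+2) = -u^{-2}\big(\log(1-u)+u\big)$. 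Summing and re-expressing in $x$ then yields \eqref{G1a}.

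For $G_1$ the key observation is that the $d=2$ specialisation of \eqref{1.19}, $\mu_{k,1}^{(2)} = \mu_{k,0}^{(2)}\,k(k-1)(2k+5)/24$, simplifies after polynomial division: since $k(k-1)(2k+5)/\big((k+1)(k+2)\big) = 2k-3 + 6/\big((k+1)(k+2)\big)$, one gets the clean splitting $\mu_{k,1}^{(2)} = 2^{k+1}(2k-3)/24 + \tfrac14\mu_{k,0}^{(2)}$. The second piece contributes exactly $\tfrac14 G_0(x)$, while the first is a polynomial-times-geometric series summed with $\sum_{k\ge 0}u^k = 1/(1-u)$ and $\sum_{k\ge 0}ku^k = u/(1-u)^2$, producing a rational function of $x$ with a double pole at $x=2$. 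Collecting the two parts and simplifying gives \eqref{G1b}.

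To pass from $G_1(x)$ to $\tilde{\rho}_{(1)}^{\infty,2}(x)$ I would apply the Sokhotski--Plemelj inversion for the convention $G_{(l)}(x) = \int_0^\infty \tilde{\rho}_{(l)}^{\infty,2}(s)/(x-s)\,ds$, under which the continuous part of the density on the support is $-\pi^{-1}\,\text{Im}\,G_1(s+i0^+)$. The only branch cut lies on $(0,2)$, arising from $\log(1-2/x) = \log\big((x-2)/x\big)$ whose argument is negative there, so $\text{Im}\,\log(1-2/x)\big|_{x=s+i0^+} = \pi$; applying this to the two logarithmic terms of \eqref{G1b} reproduces $\tfrac14(1-x/2)\chi_{0<x<2}$. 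The rational poles at $x=2$ supply the atomic part: a term $a/(x-2)$ corresponds to $a\,\delta(x-2)$ and a term $b/(x-2)^2$ to $-b\,\delta'(x-2)$, since $\int \delta(s-2)/(x-s)\,ds = 1/(x-2)$ and $\int \delta'(s-2)/(x-s)\,ds = -1/(x-2)^2$. Reading off the coefficients $-1/\big(4(x-2)\big)$ and $+1/\big(3(x-2)^2\big)$ from \eqref{G1b} then produces the stated $-\tfrac14\delta(x-2) - \tfrac13\delta'(x-2)$, completing \eqref{G1c}.

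The summations are routine; the step needing the most care is the inversion, specifically matching the rational poles of $G_1$ to the distributional atoms $\delta(x-2)$ and $\delta'(x-2)$ with the correct signs and simultaneously treating the logarithmic branch cut, so that the continuous density and the boundary atoms emerge consistently from a single transform. As a closing consistency check I would verify \eqref{3.16a} in the form $\int_0^2 s^k\,\tilde{\rho}_{(1)}^{\infty,2}(s)\,ds = \mu_{k,1}^{(2)}$, using $\int s^k\delta'(s-2)\,ds = -k\,2^{k-1}$ to evaluate the $\delta'$ atom; the continuous integral returns $\tfrac14\mu_{k,0}^{(2)}$ and the two atoms combine to $2^{k-1}(2k-3)/6 = 2^{k+1}(2k-3)/24$, exactly recovering the splitting of $\mu_{k,1}^{(2)}$ found above.
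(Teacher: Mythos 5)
Your argument is correct in all three steps, and two of them (the direct summation giving \eqref{G1a}, and the Sokhotski--Plemelj inversion with the identification of a pole $a/(x-2)$ with an atom $a\,\delta(x-2)$ and of $b/(x-2)^2$ with $-b\,\delta'(x-2)$) coincide with what the paper does; where you genuinely diverge is the passage to \eqref{G1b}. You sum the series for $G_1$ directly, using the polynomial division $k(k-1)(2k+5) = (2k-3)(k+1)(k+2) + 6$ to split $\mu_{k,1}^{(2)} = 2^{k+1}(2k-3)/24 + \tfrac14 \mu_{k,0}^{(2)}$, so that the series collapses into elementary geometric-type sums plus $\tfrac14 G_0(x)$ (your splitting is verified correct, and it does reproduce \eqref{G1b} exactly). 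The paper instead realises the cubic factor in \eqref{1.19} as a third order differential operator acting termwise on $x^{-k-1}$, namely
\begin{equation*}
x^{5/2}\frac{d}{dx}\,x^{-5/2}\frac{d}{dx}\,x^{2}\frac{d}{dx}\,x\,G_{(0)}(x) = -12\,G_{(1)}(x),
\end{equation*}
and then obtains \eqref{G1b} by differentiating the closed form \eqref{G1a}. Both methods exploit the same underlying fact --- that $\mu_{k,1}^{(2)}/\mu_{k,0}^{(2)}$ is a polynomial in $k$ --- but package it differently: yours is more elementary, requiring only known power series and no differentiation of logarithmic expressions, and it comes with the useful bonus of the closing moment check of \eqref{G1c} against \eqref{3.16a}; the paper's operator form is more systematic, since any polynomial relation $\mu_{k,l}^{(d)} = p(k)\,\mu_{k,0}^{(d)}$ converts at once into a differential operator applied to $G_{(0)}$, which is the natural way to push to higher $l$ (where, by \eqref{6.2y}, the degree of the polynomial grows and the partial-fraction splittings your method needs would become increasingly tedious to discover).
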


\begin{proof}
	According to (\ref{GG}) and (\ref{6.2e})
	\begin{equation}\label{G2}
	G_{0}(x)  = {1 \over x} \sum_{k=0}^\infty
	{\mu_{k,0}^{(2)} \over x^k} =
	\sum_{k=0}^\infty {2^{k+1} \Gamma(1+k) \over \Gamma(3+k)}
	x^{-k-1}.
	\end{equation}
Simple manipulation of the series shows that it
agrees with the analogous series implied by
(\ref{G1a}).

From the first equality in (\ref{G2}) we deduce
\begin{align}
x^{5/2} {d \over dx} x^{-5/2} {d \over dx} x^2
{d \over dx} x G_{(0)}(x) &= \sum_{k=0}^\infty 
 \mu_{k,0}^{(2)} (-k)(-k+1)(-k-5/2) x^{-k-1} 
 \nonumber \\
 & =
 -12 \sum_{k=0}^\infty 
 \mu_{k,1}^{(2)} x^{-(k+1)} = -12 G_{(1)}(x),
 \end{align}
		where the second equality follows from (\ref{1.19}).
		Substituting (\ref{G1a}) in the LHS and computing
		the derivatives gives (\ref{G1b}).
		
		Recall now the second equality in (\ref{GG}).
		Extracting $\tilde{\rho}_{(1)}^{\infty,2}(x)$ by
		application of the  Sokhotski-Plemelj formula in relation to the first two terms, and an inspection
		of the remaining terms, we deduce (\ref{G1c}).
		\end{proof}
	
	\begin{remark}
		We see that for $0 \le x < 2$, the functional forms
		of both $\tilde{\rho}_{(0)}^{\infty,2}(x)$ and
		$\tilde{\rho}_{(1)}^{\infty,2}(x)$ are proportional
		to $(1-x/2)$. This can be understood from the fact that this functional form, with $x$ replaced by $r^2$, is annihilated by both the differential operators $A$ and $B$ in (\ref{1.21}) with $d=2$, telling us furthermore
		that the same is true of
		$\tilde{\rho}_{(l)}^{\infty,2}(x)$ for each $l=0,1,2,\dots$. It is also true of $\rho^{N_0,2}(r)$,
		meaning that $(1 - r^2/2N_0)$ is an exact solution of
		the third order equation (\ref{11.d4}). However for finite $N$ and $0 \le r < 2 \sqrt{N}_0$, the density is not proportional to this one solution
		even though this is the limiting Thomas-Fermi form.
		Rather the asymptotic analysis of \cite{Mu04} shows that there are oscillatory terms, as well as edge effects; in relation to the latter, see as Appendix
		\ref{A}.
		\end{remark}
	
	\subsection*{Acknowledgements}
	This research is part of the program of study supported
	by the Australian Research Council Centre of Excellence ACEMS, and the project DP170102028. The support by D.~Wang (NUS), G.~Akemann (Bielefeld) and D.~Dai (City University) for visits to their home institutions over the time span of the project is much appreciated, as is correspondence from K.~Bencheikh, and the careful reading of a referee.
	
	\appendix
	\section{Soft edge scaling} \label{A}
	The scaling in (\ref{fp10}) relates to what is termed
	the global density, whereby to leading order the support is a ball of finite radius (here $\sqrt{2}$).
	Beginning with the work \cite{Mu04}, and extended in
	\cite{DDMS15,DDMS16}, there is interest in the functional form of the density in the neighbourhood of the boundary of the support, and with a scale so that the spacing between particles in this region is of order unity.
	This can be achieved by changing variables
	\begin{equation}\label{A1}
		r \mapsto  \sqrt{2 \tilde{M}} \Big ( 1 + {s \over 2 \tilde{M}^{2/3}} \Big )
		\end{equation}
		and introducing
		\begin{equation}\label{A2}
		\rho^{\rm edge}(s) :=  \lim_{\tilde{M} \to \infty}{1 \over \sqrt{2} \tilde{M}^{1/6}} \rho \Big (\sqrt{2 \tilde{M}} \Big ( 1 + {s \over 2 \tilde{M}^{2/3}} \Big ).
		\Big )
		\end{equation}
		Making this change of variable in the differential equation (\ref{11.d4}) and taking the limit 
		$\tilde{M} \to \infty$  gives
			\begin{equation}\label{A3}
			- {1 \over 4} {d^3 \over d s^3}
			{\rho}^{\rm edge}(s) +
			\Big ( s {d \over ds} - {d \over 2} \Big )
			{\rho}^{\rm edge}(s) = 0.
			\end{equation}

		In fact this is precisely \cite[Eq.~(207)]{DDMS16},
		where it is observed to be satisfied by
		\begin{equation}\label{A4}
		F_d(s) = {1 \over \Gamma(d/2+1) 2^{4/3} \pi^{d/2}}
		\int_0^\infty u^{d/2} {\rm Ai}(u + 2^{2/3}s) \, du,
		\end{equation}
		where here ${\rm Ai}(x)$ refers to the Airy function, derived in \cite{DDMS15} from an asymptotic analysis of 
		(\ref{11.e}).

\end{document}